\documentclass[journal,twoside,web]{IEEEcolor}
\usepackage{generic}
\usepackage{cite}
\usepackage{amsmath,amssymb,amsfonts}
\usepackage{graphicx}
\usepackage{algorithm,algorithmic}
\usepackage{hyperref}
\hypersetup{hidelinks=true}
\usepackage{textcomp}
\usepackage{lipsum}
\usepackage{dsfont}
\usepackage{empheq,mathtools}
\usepackage{tikz}
\usepackage{circuitikz}
\usepackage{subcaption}

\def\BibTeX{{\rm B\kern-.05em{\sc i\kern-.025em b}\kern-.08em
    T\kern-.1667em\lower.7ex\hbox{E}\kern-.125emX}}

\newtheorem{secthm}{Theorem}[section]

\newtheorem{seclem}[secthm]{Lemma}

\newtheorem{secprop}[secthm]{Proposition}
\newtheorem{secprob}[secthm]{Problem}
\newtheorem{secdefn}[secthm]{Definition}
\newtheorem{secrem}[secthm]{Remark}
\newtheorem{secasm}[secthm]{Assumption}

\newcommand{\rank}{{\operatorname{rank}}}
\newcommand{\tr} { {\operatorname{tr}}}

\newcommand{\Cov} {\operatorname{Cov}}
\newcommand{\bE} { {\mathbb E}}

\newcommand{\bR} { {\mathbb R}}
\newcommand{\bS} { {\mathbb S}}
\newcommand{\bZ} { {\mathbb Z}}

\newcommand{\fF} { {\mathbf F}}

\newcommand{\cN} { {\mathcal N}}

\newcommand{\esssup}{\operatorname*{ess\,sup}}
\renewcommand{\theenumi}{\roman{enumi}}

\def\red{\hfill $\lhd$}

\begin{document}

\title{\LARGE \bf
 Design of Privacy-Preserving Estimators Under\\ Prior Information Exposure
% State Privacy in Dynamical Systems via Pointwise Maximal Leakage
% delete or comment-out the following line before submission
%{\footnotesize \textsuperscript{*}Note: Sub-titles are not captured in Xplore and should not be used}
%\thanks{Identify applicable funding agency here. If none, delete this.}
}

\author{%%%% author names
    Le Liu, Yu Kawano, \it{Senior Member, IEEE}, and Ming Cao, \it{Fellow, IEEE}% delete this line if not needed
    % duplicate the line above as many times as needed to list all authors
    \\%%%% author affiliations
    %\IEEEauthorblockA{\textit{ Engineering and Technology institute Groningen,University of Groningen, Nijenborgh 4, 9747 AG  Groningen, The Netherlands}}\\% first affiliation
    %\IEEEauthorblockA{\textit{Graduate School of Advanced Science and Engineering, Hiroshima University, Higashi-Hiroshima 739-8527, Japan}}\\% delete this line if not needed
    % duplicate the line above as many times as needed to list all affiliations
    %%%% corresponding author contact details
     \thanks{The work of Liu and Cao was supported in part by the Netherlands Organization for Scientific Research (NWO-Vici-19902). The work of Liu is partly supported by the China Scholarship Council. The work of Yu Kawano was supported in part by JSPS KAKENHI Grant Number JP25K22804.}
\thanks{Le Liu and Ming Cao are with the Faculty of Science and Engineering, University of Groningen, 9747 AG Groningen, The Netherlands {\tt \small \{le.liu, m.cao\}@rug.nl}}
    \thanks{Yu Kawano is with the Graduate School of Advance Science and Engineering, Hiroshima University, Higashi-Hiroshima 739-8527, Japan 
        {\tt\small ykawano@hiroshima-u.ac.jp}}
}
\maketitle

\begin{abstract}
State estimation is widely used in networked dynamical systems, but released estimates may inadvertently reveal sensitive information about private state variables. Existing privacy-preserving estimation frameworks often do not explicitly account for adversarial prior knowledge. Motivated by this gap, we study privacy-preserving state estimation for discrete-time linear stochastic systems under prior information exposure using pointwise maximal leakage (PML). We first reveal a privacy anti-optimality property of the Kalman filter: in the absence of additive release noise, the steady-state Kalman filter maximizes PML privacy leakage among the considered class of stable linear estimators. We further establish a nontrivial lower bound on the steady-state estimation error enforced by $(\varepsilon,\delta)$-PML privacy. We then derive sufficient release-noise conditions that guarantee $(\varepsilon,\delta)$-PML privacy, first for the Kalman filter and then for stable linear estimators. Finally, a convex semidefinite program is formulated for a joint design of the estimator gain and release noise covariance, providing a tractable privacy-preserving estimator with an explicit upper bound on the steady-state estimation error. Simulations on a source--load power system demonstrate the proposed privacy--utility tradeoff.
\end{abstract}

\begin{keywords}
Privacy, Estimator Design, Pointwise Maximal Leakage,  Kalman filter, Convex Optimization 
\end{keywords}

\section{Introduction}

State estimation plays a central role in modern cyber--physical systems. In applications such as smart grids, industrial automation, intelligent transportation, and networked sensing, measurements are continuously collected and processed to infer internal variables for monitoring, prediction, and control~\cite{yu2016smart,venancio2023cps,sisinni2018industrial}. However, released estimates may reveal information beyond the intended quantity of interest. In dynamical systems, correlations among state components can cause an estimate of one variable to disclose sensitive information about another. This risk is particularly relevant in networked infrastructures, where estimates are routinely transmitted, shared, and stored across multiple entities.

Privacy-preserving estimation has therefore attracted increasing attention. Unlike static data-release problems, privacy leakage in dynamical systems depends not only on the released signal but also on the estimator dynamics. Hence, estimator design itself becomes an important component of privacy protection. Differential privacy and its variants have been widely used in estimation and control problems~\cite{dwork2006calibrating,dwork2006our,dwork2016concentrated,mironov2017renyi,dong2022gaussian,le2013differentially,zhao2014achieving,liu2026fairness_robotics}. Differential privacy provides a distribution-independent guarantee and does not require specifying an adversarial prior in its definition. In contrast, PML explicitly quantifies privacy loss through the change from the prior distribution of the secret to its posterior distribution after a particular observation. Such prior information is often available in dynamical systems, since the system model, statistics, or steady-state behavior may be publicly known or learned from historical data.

This motivates the use of prior-aware privacy notions, such as mutual-information privacy~\cite{wang2016relation}, local information privacy~\cite{jiang2021context}, Fisher-information privacy~\cite{farokhi2019ensuring}, and pointwise maximal leakage (PML)~\cite{saeidian2023pointwise_tit}. Among them, PML is particularly attractive for linear stochastic systems because it admits explicit covariance-based characterizations in the Gaussian setting~\cite{liu2025privacy}. It therefore provides a tractable way to quantify how much additional information a released estimate reveals about a private state variable beyond the adversary's prior information.

\smallskip

\subsubsection*{Literature Review} 
Privacy-preserving estimation has been studied from both static and dynamical perspectives. In static settings, representative formulations have included Fisher-information-based privacy design \cite{ke2025privacy, ke2026privacy}, context-aware local information privacy \cite{jiang2021context}, and pointwise maximal leakage (PML) \cite{saeidian2023pointwise_tit,grosse2024extremal}. In particular, \cite{ke2025privacy} has developed additive-noise mechanisms by minimizing the Fisher information available to an estimator, while \cite{jiang2021context} has incorporated prior information into local privacy guarantees. However, the resulting privacy--utility design does not admit a simple closed-form characterization in general.

In parallel, privacy-preserving estimation and control for dynamical systems have been investigated largely through differential privacy and output perturbation, including differentially private filtering \cite{le2013differentially}, differential privacy in control and network systems \cite{Jorge2016,liu2025statistical, liu2025initial, liu2025design}, privacy-preserving average consensus and distributed algorithms \cite{mo2016privacy,nozari2017differentially,wang2019privacy, liu2025rendezvous}, privacy-preserving dynamic controller design \cite{Yu2020,Yu2021}, and privacy-aware optimization and online learning \cite{wang2023tailoring,wang2024robust,chen2024local}. Specifically, differentially private filtering has been investigated in \cite{le2013differentially}, where privacy guarantees have been incorporated into estimators for dynamical systems. Privacy-preserving dynamic controller design has been considered in \cite{Yu2020}, where controllers have been synthesized to protect sensitive signals while maintaining closed-loop performance. In estimation-oriented applications, cooperative privacy-preserving state estimation in smart grids has been studied in \cite{kim2011cooperative}, privacy-preserving distributed Kalman filtering has been developed in \cite{moradi2022privacy}, and optimal privacy-aware state estimation has recently been investigated in \cite{weng2025optimal}. In \cite{liu2025privacy}, the relationship between output privacy and Kalman filtering has been studied, and a privacy-dependent lower bound on the estimation error has been established. Broad overviews of privacy in control and dynamical systems can be found in \cite{han2018privacy,nekouei2019information,yan2024survey}.

Despite these developments, three questions remain open. First, it is unclear how the gain of a causal estimator affects prior-aware privacy leakage in steady state. Second, the privacy behavior of the Kalman filter under prior information exposure has not been characterized: although it is optimal for estimation, it may be unfavorable for privacy. Third, existing PML-based privacy conditions do not directly yield a tractable co-design method for the estimator gain and release noise covariance. These questions motivate the analysis and synthesis framework developed in this paper.

% Although these works provide valuable insights into privacy-preserving estimation and control, the role of adversarial prior knowledge in dynamical systems remains insufficiently understood. In many applications, an adversary may infer statistical information about the private variable from the system model and its steady-state behavior. Such prior information is not explicitly captured by most existing estimation frameworks, particularly differential privacy, whose guarantees are largely distribution-independent. In contrast, PML provides a prior-aware worst-case privacy notion and admits tractable characterizations in the Gaussian setting, making it well suited for linear stochastic systems. However, prior-aware privacy studies have mainly focused on static data-release problems. To the best of our knowledge, how estimator dynamics affect PML privacy leakage in stochastic dynamical systems has not been systematically studied. This paper addresses this gap by analyzing the impact of the estimator gain on steady-state PML privacy and developing a privacy-preserving estimator synthesis method. 
\smallskip

\subsubsection*{Contributions} 

In this paper, we study privacy-preserving estimation for discrete-time linear stochastic systems under PML. A linear function of the state is regarded as the secret variable, while another linear function is the quantity to be estimated and released through a linear time-invariant estimator with additive Gaussian perturbation. We first analyze the steady-state Kalman filter from a PML perspective. Under a structural condition in which the secret variable lies in the linear span of the estimate and no additional release noise is introduced, we show that it is the least private estimator within the considered class of stable linear estimators. Furthermore, under the same structural condition, we establish a nontrivial lower bound on the steady-state estimation error imposed by the $(\varepsilon,\delta)$-PML privacy constraint. 

Building on this analysis, we develop privacy-preserving estimator design methods. We derive sufficient conditions on the release noise covariance that guarantee $(\varepsilon,\delta)$-PML privacy for both the Kalman filter and general stable linear estimators of the form~\eqref{eq:estimator}. We then formulate a convex semidefinite program to jointly design the estimator gain and the release noise covariance, yielding a tractable privacy-aware estimator with an explicit upper bound on the steady-state estimation error. The main contributions of this paper are summarized as follows:
\begin{enumerate}
    \item We formulate a steady-state PML privacy problem for stable linear time-invariant estimators, where a private linear state variable must be protected while a possibly different linear state variable is estimated and released. The formulation is motivated by a source--load power system in which released current estimates may reveal private user consumption information.

    \item We reveal a privacy anti-optimality property of the Kalman filter. Specifically, under the structural condition that the secret variable is a linear function of the released variable and in the absence of release noise, the steady-state Kalman filter is the least private estimator among stable linear estimators of the form~\eqref{eq:estimator}, in the sense that it maximizes the required PML privacy budget for every fixed failure probability.

    \item Under the same structural condition, we show that $(\varepsilon,\delta)$-PML privacy imposes a nontrivial lower bound on the steady-state estimation error, independently of how the estimator gain and release noise covariance are chosen.

    \item We derive sufficient release-noise covariance conditions that guarantee $(\varepsilon,\delta)$-PML privacy, first for the Kalman filter and then for general stable linear time-invariant estimators.

    \item We propose a convex semidefinite program for jointly designing the estimator gain and release noise covariance. Any feasible solution guarantees $(\varepsilon,\delta)$-PML privacy and provides an explicit upper bound on the steady-state estimation error.
\end{enumerate}

\subsubsection*{Organization}

The remainder of this paper is organized as follows. Section~\ref{sec:prob} presents a motivating example, introduces the problem formulation, and defines the adopted PML privacy notion. Section~\ref{sec:Kalman} analyzes the privacy limitations of the Kalman filter. Section~\ref{sec:filter} develops privacy-preserving estimator design methods and presents a semi-definite-programming-based synthesis framework. Section~\ref{sec:sim} illustrates the proposed approach through numerical simulations. Section~\ref{sec:con} concludes this paper.

%%%%%%%%%%%%%%%%%%%%%%%%%%%%%%%%%%%%%%%%%%%%%%%%%%%%%%%%%%%%%%%%%%%%%%%
%%%%%%%%%%%%%%%%%%%%%%%%%%%%%%%%%%%%%%%%%%%%%%%%%%%%%%%%%%%%%%%%%%%%%%%
%%%%%%%%%%%%%%%%%%%%%%%%%%%%%%%%%%%%%%%%%%%%%%%%%%%%%%%%%%%%%%%%%%%%%%%

\smallskip

\subsubsection*{Notation}
The sets of real numbers, integers and nonnegative integers are denoted by $\bR$, $\bZ$ and $\bZ_{+}$, respectively. 
For $n\in\bZ_{+}$, let $\bS^{n}$ be the set of $n\times n$ real symmetric matrices, and let $\bS^{n}_{+}$ and $\bS^{n}_{++}$ denote the positive semidefinite and positive definite cones, respectively. For integers $t \le T$, we write $ x_{t:T} := \begin{bmatrix} x_t^\top & x_{t+1}^\top & \cdots & x_T^\top \end{bmatrix}^\top$
for the stacked release vector over the time interval from $t$ to $T$.
For $P,Q\in\bS^{n}$, $P \succeq Q$ (resp. $P \succ Q$) means $P-Q\in\bS^{n}_{+}$ (resp. $P-Q\in\bS^{n}_{++}$). For $P \in \bS^{n}$,
$\lambda_{\min}(P) := \lambda_1(P) \le \lambda_2 (P) \le \cdots \le \lambda_n(P) =:\lambda_{\max}(P)$ denote its eigenvalues. For $A \in \bR^{n \times n}$, $\rank(A)$ and $\operatorname{tr}(A)$ denote its rank and trace, respectively; $\rho(A)$ denotes its spectral radius; $A^\dagger$ denotes its Moore--Penrose pseudoinverse; $\operatorname{Im}(A)$ and $\operatorname{Ker}(A)$ denote its image and kernel, respectively. For a subspace $\mathcal V$, $\mathcal V^\perp$ denotes its orthogonal complement.
The identity matrix of dimension $n$ is denoted by $I_n$. For vectors and matrices, the $2$-norm and the Frobenius norm are denoted by $\|\cdot \|_2 $ and $\|\cdot\|_{F}$, respectively. 

The expectation of a random variable is denoted by $\bE[\cdot]$. For random vectors $x$ and $y$ with finite second moments, $\Cov(x):=\bE[(x-\bE[x])(x-\bE[x])^\top]$ denotes the covariance matrix of $x$, and $\Cov(x,y):=\bE[(x-\bE[x])(y-\bE[y])^\top]$ denotes the cross-covariance matrix of $x$ and $y$. A Gaussian distribution with mean $\mu$ and covariance $\Sigma\succeq0$ is written as $\cN(\mu,\Sigma)$. The chi–squared distribution with $\ell$ degrees of freedom is denoted by $\chi^{2}_{\ell}$; its cumulative distribution function (CDF) is denoted by $\fF_{\chi_{\ell}^2}(\cdot)$, and its quantile function is $\fF^{-1}_{\chi_{\ell}^2}(\cdot)$. When $\ell = 0$, we use the convention 
$\fF^{-1}_{\chi^2_0}(1-\delta)=0$.

%%%%%%%%%%%%%%%%%%%%%%%%%%%%%%%%%%%%%%%%%%%%%%%%%%%%%%%%%%%%%%%%%%%%%%%
%%%%%%%%%%%%%%%%%%%%%%%%%%%%%%%%%%%%%%%%%%%%%%%%%%%%%%%%%%%%%%%%%%%%%%%
%%%%%%%%%%%%%%%%%%%%%%%%%%%%%%%%%%%%%%%%%%%%%%%%%%%%%%%%%%%%%%%%%%%%%%%

%%%%%%%%%%%%%%%%%%%%%%%%%%%%%%%%%%%%%%%%%%%%%%%%%%%%%%%%%%%%%%%%%%%%%%%
%%%%%%%%%%%%%%%%%%%%%%%%%%%%%%%%%%%%%%%%%%%%%%%%%%%%%%%%%%%%%%%%%%%%%%%
%%%%%%%%%%%%%%%%%%%%%%%%%%%%%%%%%%%%%%%%%%%%%%%%%%%%%%%%%%%%%%%%%%%%%%%

\section{Motivation and  Problem Formulation}
\label{sec:prob}
% Privacy preservation is an important issue in dynamical systems when releasing state estimates may inadvertently disclose sensitive user information. Differential privacy (DP)~\cite{dwork2006differential, le2013differentially} provides a systematic method for designing such noise and quantifying privacy. However, DP does not explicitly incorporate adversarial prior knowledge, even though such information is often publicly available in dynamical systems. In this paper, we adopt \emph{pointwise maximal leakage} (PML) and develop a systematic framework for designing privacy-preserving estimators for state estimates in the presence of public prior information.

% In this section, we first present a motivating example, and then introduce the definition of PML.

In this section, we first present a motivating source--load example and
then formulate the steady-state PML privacy problem for stable linear time-invariant estimators.

%%%%%%%%%%%%%%%%%%%%%%%%%%%%%%%%%%%%%%%%%%%%%%%%%%%%%%%%%%%%%%%%%%%%%%%
%%%%%%%%%%%%%%%%%%%%%%%%%%%%%%%%%%%%%%%%%%%%%%%%%%%%%%%%%%%%%%%%%%%%%%%
%%%%%%%%%%%%%%%%%%%%%%%%%%%%%%%%%%%%%%%%%%%%%%%%%%%%%%%%%%%%%%%%%%%%%%%

\subsection{Motivating Example}
\label{sec:example}

\begin{figure*}[t]
\centering
% ---------------- (a) abstract graph ----------------
\begin{subfigure}{0.26\textwidth}
\centering
\begin{circuitikz}[european resistors, american currents, scale=1]
    \draw
    (0,0) node[circ, scale=0.6, fill=black]{} coordinate(S)
    to[R, *-* , l=$R$] (2.6,0) node[circ, scale=0.6, fill=black]{} coordinate(L);

    \node[above=2mm] at (S) {source};
    \node[above=2mm] at (L) {load};
\end{circuitikz}
\caption{Abstract source--load interconnection.}
\end{subfigure}
\hfill
% ---------------- (b) source expansion ----------------
\begin{subfigure}{0.30\textwidth}
\centering
\begin{circuitikz}[european resistors, american currents, scale=1]
    \draw
    (0,2) node[circ]{} coordinate(S)
          to[short] (2.6,2);

    \node[left] at (0,2) {source};

    % top signal
    \draw[->] (0,3.2) -- (0,2.1);
    \node[left] at (0,2.8) {$u(t)$};

    % source-side dynamics
    \draw
    (S) to[C, l_=$C_1$] (0,0)
        node[ground]{};

    % current label
    \draw[->] (1.0,2.35) -- (2.1,2.35);
    \node at (1.55,2.65) {$i_s(t)$};
\end{circuitikz}
\caption{Expanded source-side model.}
\end{subfigure}
\hfill
% ---------------- (c) load expansion ----------------
\begin{subfigure}{0.34\textwidth}
\centering
\begin{circuitikz}[european resistors, american currents, scale=1]
    % top node
    \draw
    (0,2) node[circ]{} coordinate(L) to[short] (4.2,2);

    \node[above] at (0.2,2) {load};

    % load-side capacitor
    \draw
    (L) to[C, l_=$C_2$] (0,0)
    node[ground]{};

    % baseline load
    \draw
    (2.4,2) to[R, l_=$R_L$] (2.4,0)
    node[ground]{};

    % user consumption / demand branch
    \node[draw, minimum width=1.35cm, minimum height=0.75cm, align=center] (D) at (4.2,1.0) {user load\\$d(t)$};
    \draw (4.2,2) -- (D.north);
    \draw (D.south) -- (4.2,0);
    \node[ground] at (4.2,0) {};
\end{circuitikz}
\caption{Expanded load-side model.}
\end{subfigure}

\caption{Source--load representation and the corresponding source-side and load-side internal dynamics. The branch labeled $d(t)$ denotes the user-driven consumption component.}
\label{fig:circuit}
\end{figure*}

\begin{figure*}[!t]
    \centering
    \includegraphics[width= 0.8\textwidth]{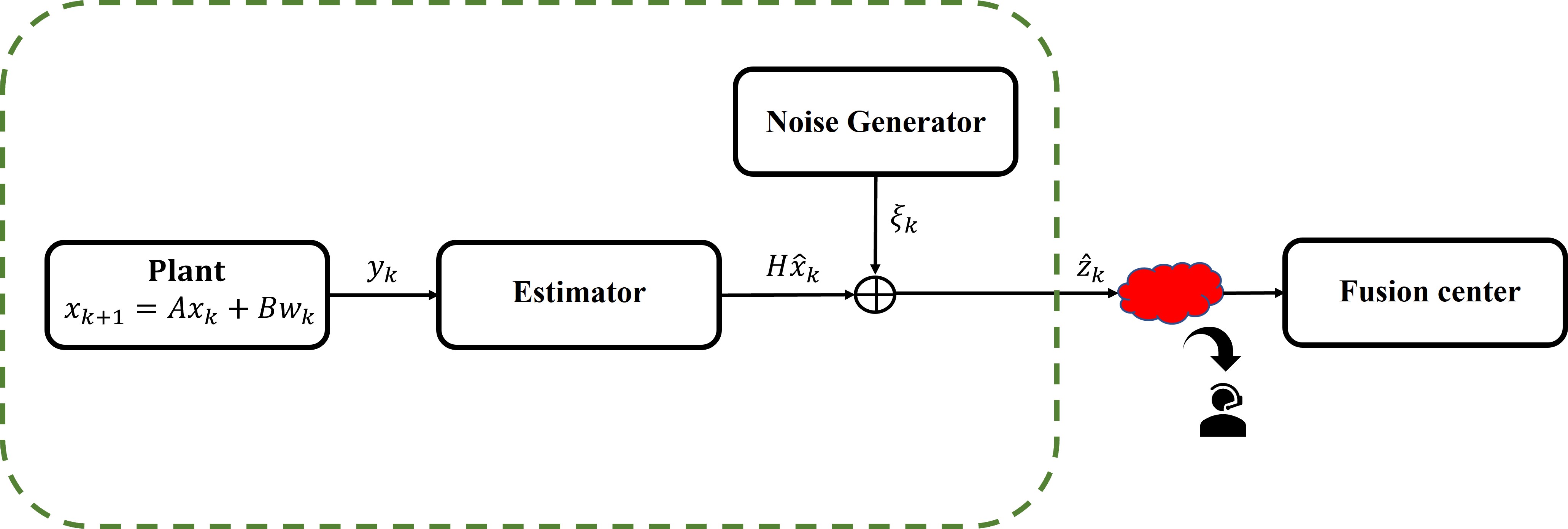}
    \caption{Mechanism diagram: The plant output $y_k$ is processed by the estimator to produce $H\hat{x}_k$, and the noise generator then adds privacy noise $\xi_k$ to form the released signal $\hat{z}_k$. The fusion center receives $\hat{z}_k$, while an eavesdropper may intercept the transmitted signal over the communication channel. The local components inside the green dashed box are assumed secure and not subject to eavesdropping.}
    \label{fig:arch}
\end{figure*}

As a motivating example, consider the source--load power system shown in Fig.~\ref{fig:circuit} \cite{ZHAO201518}. A source node and a load node are connected through a line of resistance $R$. The source is equipped with a local voltage controller that regulates the source-side voltage $v_s(t)$ around a prescribed reference $v_{\mathrm{ref}}$. The load side consists of a baseline resistive demand $R_L$, a capacitor $C_2$, and a user-driven consumption component $d(t)$. Because $d(t)$ is directly related to the user's energy usage, it is regarded as private information.

Let $v_s(t)$ and $v_\ell(t)$ denote the source-side and load-side voltages, respectively. The system is modeled by the It\^o stochastic differential equations:
\begin{align*}
C_1\, \mathrm{d}v_s(t)
&= \left(-\frac{v_s(t)-v_\ell(t)}{R}+u(t)\right)\mathrm{d}t + b_1\, \mathrm{d}W_1(t),\\
C_2\, \mathrm{d}v_\ell(t)
&= \left(\frac{v_s(t)-v_\ell(t)}{R}-\frac{v_\ell(t)}{R_L}-d(t)\right)\mathrm{d}t + b_2\, \mathrm{d}W_2(t),\\
\mathrm{d}d(t)
&= -\lambda d(t)\,dt + b_3\, \mathrm{d}W_3(t),
\end{align*}
where $\lambda>0$, the constants $b_1,b_2,b_3$ are diffusion intensities, and $W_1,W_2,W_3$ are mutually independent standard Wiener processes. In contrast to the constant-load model $\mathrm{d}d(t)=0$ in~\cite{Yu2020,ZHAO201518}, the consumption state $d(t)$ is modeled here as a stochastic process~\cite{Verdejo2016AppliedEnergy,Verdejo2019SDE}, allowing random but persistent demand variations. Nevertheless, the constant-load model can be recovered approximately by taking $\lambda$ and $b_3$ sufficiently small.

To represent the internal source-side voltage regulation, we introduce a controller state $\eta(t)$ and define control input $u(t)$ by
\begin{align*}
    \left\{
\begin{aligned}
\dot{\eta}(t) &= v_s(t)-v_{\mathrm{ref}},\\
u(t) &= -k_p\bigl(v_s(t)-v_{\mathrm{ref}}\bigr)-k_i\eta(t),
\end{aligned}
\right.
\end{align*}
where $k_p>0$ and $k_i>0$ are the proportional and integral voltage-control gains, respectively; $v_{\mathrm{ref}}$ is the reference voltage value. This controller regulates the source-side voltage toward the prescribed reference value.

{For privacy analysis, it is standard to deal with discrete-time systems \cite{Yu2020, liu2025design}. A temporal discretization of the closed-loop system yields} 
%The closed-loop continuous-time system can be discretized and written in a compact form as:
\begin{align*}
x_{k+1}=Ax_k+Bw_k,
\end{align*}
where $x_k :=
\begin{bmatrix}
v_{s,k}-v_s^\ast &
v_{\ell,k}-v_\ell^\ast &
d_k-d^\ast &
\eta_k-\eta^\ast
\end{bmatrix}^{\top}$
denotes the augmented deviation state, with $v_s^\ast = v_{\mathrm{ref}},
v_\ell^\ast = \frac{R_L}{R+R_L}v_{\mathrm{ref}},
d^\ast = 0,
\eta^\ast = -\frac{v_{\mathrm{ref}}}{k_i(R+R_L)}$. The sequence $\{w_k\}$ is an independent standard Gaussian process, and the matrices $A$ and $B$ are obtained from the closed-loop continuous-time dynamics.

As the energy provider relies on source-current information to monitor power flow, a sensor measures the line current
$$
y_k = i_{s,k}+\vartheta_k,
$$
where $i_{s,k} = (v_{s,k} - v_{\ell,k}) / R$ is the line current and  the sequence $\{\vartheta_k\}$ is an independent standard Gaussian process. The measurement is processed locally by an estimator to reduce the computational burden at the fusion center, producing the current estimate $\hat{i}_{s,k}$. The estimate $\hat{i}_{s,k}$ is then transmitted to the fusion center for monitoring. Although $\hat{i}_{s,k}$ is intended to provide useful operational information, it may also carry information about the user's private energy consumption $d_k$.

A key feature of this setting is that the user consumption state $d_k$ may already be partially predictable before any estimate is released. In practice, the system model, operating conditions, and historical data may be available to an eavesdropper. Consequently, the steady-state distribution of $d_k$ can serve as prior information. The privacy risk of releasing $\hat i_{s,k}$ should therefore be measured not only by the information contained in the released signal itself, but also by the additional information it reveals beyond this prior. This motivates the use of PML as the privacy metric in this paper.

%

%%%%%%%%%%%%%%%%%%%%%%%%%%%%%%%%%%%%%%%%%%%%%%%%%%%%%%%%%%%%%%%%%%%%%%%
%%%%%%%%%%%%%%%%%%%%%%%%%%%%%%%%%%%%%%%%%%%%%%%%%%%%%%%%%%%%%%%%%%%%%%%
%%%%%%%%%%%%%%%%%%%%%%%%%%%%%%%%%%%%%%%%%%%%%%%%%%%%%%%%%%%%%%%%%%%%%%%

\subsection{Problem Formulation}

For generality, we consider the following linear time-invariant discrete-time system:
\begin{align}\label{eq:LTI}
    \left\{
    \begin{aligned}
        x_{k+1} &= A x_k + B w_k,\\
        y_k &= C x_k + D v_k,
    \end{aligned}
    \right.
\end{align}
where $x_k \in \mathbb{R}^n$ is the system state and $y_k \in \mathbb{R}^m$ is the measurement output. The noise processes $\{w_k\}$ and $\{v_k\}$ are mutually independent and identically distributed (i.i.d.) zero-mean Gaussian sequences satisfying
\begin{align*}
    \bE[w_k w_k^\top] = I_p, \qquad \bE[v_k v_k^\top] = I_q,
\end{align*}
and are also independent of the initial state $x_0$, where $x_0$ is a random Gaussian variable independent of $\{w_k\}$ and $\{v_k\}$. The matrices $A$, $B$, $C$, and $D$ have compatible dimensions. 

Within this system model, we distinguish between the variable used for estimation and the variable that is considered private. Specifically, the following linear function of the state is considered as the privacy-sensitive variable:
\begin{align}
\label{eq:secret}
\zeta_k = F x_k,
\end{align}
where $F \in \bR^{r \times n}$. 
%We next introduce the estimation and release mechanism. In this setting, $\zeta_k$ is the private quantity to be protected.
%while the fusion center aims to estimate a different linear function of the state, given by
%Since linearly dependent rows of $F$ do not provide additional information about the secret variable $\zeta_k$, it is natural to remove this redundancy by imposing the following assumption.

{For the operation purpose, the fusion center requires to receive}
\begin{align*}
z_k = H x_k,
\end{align*}
where $H \in \bR^{s \times n}$. {To send this information, the plant needs to estimate it by the measured output~$y_k$. We suppose that the estimation of~$z_k$ is constructed from the state estimation obtained by the standard observer:}
%To this end, the plant is equipped with a local estimator of the form
\begin{align}
\label{eq:estimator}
\hat{x}_{k+1}
= A \hat{x}_k + K\bigl(y_{k+1}-CA\hat{x}_k\bigr),
\end{align}
where $K \in \bR^{n \times m}$ is the estimator gain. Based on the state estimate $\hat{x}_k$, the quantity released to the fusion center is an estimate of $z_k$ corrupted by additive noise, i.e.,
\begin{align}
\label{eq:noise}
    \hat{z}_k = H\hat{x}_k + \xi_k,
\end{align}
 where the sequence $\{\xi_k\}$, merely added for the purpose of protecting from~$\zeta_k$ begin identified, is an i.i.d. zero-mean Gaussian sequence with covariance matrix $\Xi$, independent of $\{w_k\}$, $\{v_k\}$, and the initial state $x_0$. The resulting privacy-preserving release architecture is illustrated in Fig.~\ref{fig:arch}.  %{\color{red} The estimate of $z_k$ is computed by a trusted local estimator located at the plant side. This architecture reduces the computational burden at the remote fusion center by performing part of the filtering locally before transmission, which is standard in remote estimation \cite{ding2020remote,zhang2025remote}.}{\color{blue}YK: I'm confused. Why does the plant needs to estimate~$z_k$? It can directly send its output~$y_k$ (after adding noise) to the fusion center.}

The released signal $\hat z_k$ is required to serve two competing objectives. It should provide useful information for estimating the utility variable $z_k$, while limiting the disclosure of the privacy-sensitive variable $\zeta_k$. Thus, the central design problem is to choose the estimator gain $K$ and the release-noise covariance $\Xi$ so as to balance estimation performance and privacy.

\begin{secprob}
\label{prob:1}
For the estimator in the form of~\eqref{eq:estimator} and~\eqref{eq:noise}, how can the estimator gain $K$ and the noise covariance $\Xi$ be designed so that $\hat z_k$ remains useful {as an estimation of} $z_k$, while limiting the disclosure of the privacy-sensitive variable $\zeta_k$ to an adversary with prior knowledge of the system model and steady-state distribution?
\red
\end{secprob}

To answer Problem~\ref{prob:1}, we first introduce assumptions that make the steady-state privacy analysis well defined and specify the adversary's prior information rigorously. We then present the PML privacy criterion used to quantify disclosure of the privacy-sensitive variable.

%We impose the following assumptions.

\subsection{Standing Assumptions}
We study Problem~\ref{prob:1} under the following assumptions.

\begin{secasm}
\label{asm:rank_F}
The matrix $F$ in \eqref{eq:secret} has full row rank.
\red
\end{secasm}

Assumption~\ref{asm:rank_F} removes redundant components of the privacy-sensitive. %variable, while Assumption~\ref{asm:2} ensures that the state process admits a unique nondegenerate steady-state Gaussian distribution. Specifically,

\begin{secasm}
\label{asm:2}
The pair $(A,B)$ is controllable and the system matrix $A$ is Schur stable, i.e.,
\begin{align*}
\rho(A)<1. \tag*{\red}
\end{align*}
\end{secasm}

Assumption~\ref{asm:2} guarantees the existence of a unique steady-state Gaussian distribution for system~\eqref{eq:LTI}. Indeed, under Assumption~\ref{asm:2}, the asymptotic distribution of $x_k$ is uniquely determined by the system matrices $A$ and $B$, and its covariance matrix $\Sigma_\infty$ is the unique positive definite solution to the following discrete-time Lyapunov equation,
\begin{align}
\label{eq:steady}
    \Sigma_\infty = A\Sigma_\infty A^\top + BB^\top. 
\end{align}
If the estimator~\eqref{eq:estimator} is additionally stable, then there exists a random vector $(\zeta_\infty,\hat z_\infty)$ such that
\begin{align*}
    \mathcal{L}(\zeta_k,\hat z_k)\xRightarrow[k\to\infty]{}\mathcal{L}(\zeta_\infty,\hat z_\infty),
\end{align*}
where $\mathcal{L}(\zeta_\infty,\hat z_\infty)$ is the limiting joint distribution of $(\zeta_k,\hat z_k)$. We therefore evaluate privacy in steady state, where the long-run joint distribution of the secret and released variables is well defined. In addition, $\mathcal{L}(\zeta_\infty,\hat z_\infty)$ may be regarded as prior knowledge available to an adversary, because the system dynamics are often publicly known or can be inferred from historical data.

In some analysis, we also impose the following assumption.
\begin{secasm}
\label{asm:3}
The measurement-noise covariance satisfies~$DD^\top \succ 0$. \red
\end{secasm}

\subsection{Pointwise Maximal Leakage Privacy}

We now introduce the PML privacy criterion \cite{liu2025privacy, saeidian2023pointwise_tit} used to quantify the information leaked from the privacy-sensitive variable $\zeta_\infty$ through the released signal $\hat z_\infty$.

\begin{secdefn}
The pointwise maximal leakage from $\zeta_\infty$ to a realization $\hat{z}_\infty=\hat{z}$ is defined as
\begin{align}
\label{eq:PML}
    \ell(\zeta_\infty \to \hat{z})
    := \log \esssup_{\zeta \in \mathbb{R}^{r}}
    \frac{f_{\zeta_\infty \mid \hat{z}_\infty}(\zeta \mid \hat{z}_{\infty} = \hat{z})}{f_{\zeta_\infty}(\zeta)},
\end{align}
where $f_{\zeta_\infty \mid \hat{z}_\infty}$ denotes the conditional density function of $\zeta_\infty$ given the observation $\hat{z}_\infty$, and $f_{\zeta_\infty}$ denotes the marginal density function of $\zeta_\infty$.
\red
\end{secdefn}

The quantity $\ell(\zeta_\infty \to \hat z)$ in~\eqref{eq:PML} measures the worst-case increase from the prior density of $\zeta_\infty$ to the posterior density after observing a particular released value $\hat z_\infty=\hat z$. Thus, a larger value of $\ell(\zeta_\infty \to \hat z)$ corresponds to a larger privacy loss for that realization. Since the released signal $\hat z_\infty$ is random, the induced privacy loss $\ell(\zeta_\infty\to \hat z_\infty)$ is also a random variable. We therefore require this privacy loss to be below a prescribed level with high probability, leading to the following steady-state privacy notion.

\begin{secdefn}
\label{def:pml_privacy_sys}
The estimator characterized by~\eqref{eq:estimator} and~\eqref{eq:noise}, with gain $K$ and noise covariance $\Xi$, is said to preserve \emph{$(\varepsilon,\delta)$-PML privacy} for the secret variable $\zeta_\infty$ if
\begin{align}
\label{eq:PML_privacy_sys}   \mathbb{P}\bigl(\ell(\zeta_\infty \to \hat{z}_\infty) \le \varepsilon\bigr) \ge 1-\delta,
\end{align}
where $\varepsilon \ge 0$ and $\delta \in (0,1)$.
\red
\end{secdefn}

Here, $\varepsilon$ in~\eqref{eq:PML_privacy_sys} quantifies the allowable privacy leakage level, while $\delta$ in~\eqref{eq:PML_privacy_sys} specifies the probability with which this leakage level may be exceeded. Therefore, smaller values of $\varepsilon$ and $\delta$ correspond to a stronger privacy requirement.
\begin{secrem}
Definition~\ref{def:pml_privacy_sys} is formulated for real-time privacy, since the estimator releases one message at each time instant. Finite-horizon privacy can be treated similarly by stacking variables over time. For instance, if one wishes to protect $\zeta_k$ when the disclosed information is the release sequence $\hat z_{k:k+T-1}$, then one can regard $\hat z_{k:k+T-1}$ as a single stacked release vector and evaluate the leakage from $\zeta_k$ to that vector. Therefore, privacy of $\zeta_k$ against multiple releases can still be analyzed within the same framework.
\red
\end{secrem}
%%%%%%%%%%%%%%%%%%%%%%%%%%%%%%%%%%%%%%%%%%%%%%%%%%%%%%%%%%%%%%%%%%%%%%%
%%%%%%%%%%%%%%%%%%%%%%%%%%%%%%%%%%%%%%%%%%%%%%%%%%%%%%%%%%%%%%%%%%%%%%%
%%%%%%%%%%%%%%%%%%%%%%%%%%%%%%%%%%%%%%%%%%%%%%%%%%%%%%%%%%%%%%%%%%%%%%%

%%%%%%%%%%%%%%%%%%%%%%%%%%%%%%%%%%%%%%%%%%%%%%%%%%%%%%%%%%%%%%%%%%%%%%%
%%%%%%%%%%%%%%%%%%%%%%%%%%%%%%%%%%%%%%%%%%%%%%%%%%%%%%%%%%%%%%%%%%%%%%%
%%%%%%%%%%%%%%%%%%%%%%%%%%%%%%%%%%%%%%%%%%%%%%%%%%%%%%%%%%%%%%%%%%%%%%%

\section{Privacy Limitations of the Kalman Filter}
\label{sec:Kalman}
This section studies the PML privacy of the Kalman filter. We first recall the steady-state Kalman filter and formulate the corresponding released signal. We then derive a Gaussian characterization of the induced PML privacy loss and specialize it to the Kalman-filter-based release. Finally, we compare the Kalman filter with other stable linear estimators under the structural relation $F=LH$ and in the absence of release noise.

\subsection{Review of Kalman Filter}

It is well known that the Kalman filter provides the minimum mean-square error estimate for system~\eqref{eq:LTI}. In particular, the Kalman filter converges to the steady state under Assumption~\ref{asm:2}. Furthermore, the steady-state Kalman filter can be viewed as a special case of the estimator~\eqref{eq:estimator}, corresponding to the choice $K=K^{\mathrm{KF}}$, where $K^{\mathrm{KF}}$ is given by
\begin{align*}
K^{\mathrm{KF}}
=
P^- C^\top
\bigl(CP^-C^\top + DD^\top\bigr)^{-1},
\end{align*}
and $P^-$ denotes the steady-state \emph{prior} error covariance satisfying
\begin{align*}
P^-
&=
A P^- A^\top + BB^\top \\
&- AP^- C^\top
\bigl(CP^-C^\top + DD^\top\bigr)^{-1}
CP^-A^\top.
\end{align*}
The corresponding steady-state \emph{posterior} error covariance is denoted by $P^{\mathrm{KF}}$ and is obtained from the prior covariance $P^-$ through the steady-state Kalman update
\begin{align}
\label{eq:PKF}
P^{\mathrm{KF}}
&=
P^- - P^- C^\top
\bigl(CP^-C^\top + DD^\top\bigr)^{-1}
CP^-.
\end{align}

In the later analysis, we require positive definiteness of~$P^{\mathrm{KF}}$, which is guaranteed under mild assumptions.

\begin{seclem}
\label{lem:prior_pd}
Under Assumptions~\ref{asm:2} and~\ref{asm:3}, it follows that~$P^{\mathrm{KF}} \succ 0$.
\end{seclem}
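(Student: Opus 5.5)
The plan is to show first that the steady-state prior covariance $P^-$ is positive definite, and then to transfer positivity to $P^{\mathrm{KF}}$ through the Joseph form of the measurement update. Throughout, write $R:=DD^\top\succ 0$ (Assumption~\ref{asm:3}) and $K:=K^{\mathrm{KF}}$, which is well defined because $CP^-C^\top+R\succ 0$.

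\textbf{Step 1: Joseph form.} Substituting $K=P^-C^\top(CP^-C^\top+R)^{-1}$ and expanding gives the standard identity
\begin{align*}
P^{\mathrm{KF}}=(I-KC)P^-(I-KC)^\top+KRK^\top ,
\end{align*}
which agrees with \eqref{eq:PKF}. Combining this with the Riccati equation for $P^-$ yields $P^-=AP^{\mathrm{KF}}A^\top+BB^\top$. Hence
\begin{align*}
P^-=\Phi P^-\Phi^\top+\begin{bmatrix} B & AKD\end{bmatrix}\begin{bmatrix} B & AKD\end{bmatrix}^\top,\qquad \Phi:=A(I-KC).
\end{align*}

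\textbf{Step 2: stability and controllability.} Since $A$ is Schur, $(C,A)$ is detectable, and $(A,B)$ is controllable by Assumption~\ref{asm:2}. By standard Riccati theory, the stabilizing solution $P^-$ therefore makes $\Phi=A-AKC$ Schur stable. It follows that $P^-$ is the controllability Gramian $\sum_{j\ge 0}\Phi^j G G^\top (\Phi^\top)^j$ with $G:=[B,\ AKD]$. It is then enough to show that $(\Phi,G)$ is controllable.

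Because $D$ has full row rank, $\operatorname{Im}(AKD)=\operatorname{Im}(AK)$. Controllability of $(\Phi,G)$ is therefore equivalent to controllability of $(A-(AK)C,[B,\ AK])$. Controllability is invariant under output injection through an input direction, so I will argue with the PBH test. Suppose $w^\ast[\Phi-\lambda I,\ B,\ AK]=0$. Then $w^\ast AK=0$, so $w^\ast\Phi=w^\ast A$, and thus $w^\ast[A-\lambda I,\ B]=0$. Controllability of $(A,B)$ forces $w=0$. Hence $(\Phi,G)$ is controllable and $P^-\succ 0$.

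\textbf{Step 3: conclusion.} Take $v$ with $v^\top P^{\mathrm{KF}}v=0$. By the Joseph form, both terms are positive semidefinite, so both vanish. Since $P^-\succ 0$, the first term gives $(I-KC)^\top v=0$. Since $R\succ 0$, the second term gives $K^\top v=0$. Together these give $v=(I-KC)^\top v+C^\top K^\top v=0$, so $P^{\mathrm{KF}}\succ 0$.

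I expect Step 2 to be the main obstacle. The delicate points are justifying that the steady-state solution is the stabilizing one, so that $\Phi$ is Schur and the Gramian representation is valid, and carrying out the output-injection controllability argument carefully. If the stability of $\Phi$ needs more care, a fallback is to iterate the fixed-point relation $P^-\succeq \Phi^j GG^\top(\Phi^\top)^j+\cdots$ finitely many ($n$) times. This gives $P^-\succeq\sum_{j<n}\Phi^jGG^\top(\Phi^\top)^j\succ 0$ directly from controllability, without using stability at all.
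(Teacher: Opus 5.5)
Your proof is correct, but it takes a different route from the paper. The paper gets $P^-\succ0$ in one line by citing Theorem~13.1.3 of Lancaster--Rodman, using detectability of $(A,C)$ from $\rho(A)<1$, controllability of $(A,B)$, and $DD^\top\succ0$. It then transfers positivity through the information form $P^{\mathrm{KF}}=\bigl((P^-)^{-1}+C^\top(DD^\top)^{-1}C\bigr)^{-1}$.

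You instead prove $P^-\succ0$ directly. You rewrite the filter Riccati equation as the closed-loop Lyapunov identity $P^-=\Phi P^-\Phi^\top+GG^\top$. You then show via PBH that $(\Phi,G)$ is controllable: this is output injection through an input direction, with $\operatorname{Im}(AKD)=\operatorname{Im}(AK)$ because $D$ has full row rank. Finally, you replace the information form by a kernel argument on the Joseph form.

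The paper's route is shorter but outsources the essential fact, whereas yours is self-contained. Your finite-iteration fallback is in fact the cleaner version of your argument. It needs neither detectability nor the identification of $P^-$ as the stabilizing solution; it uses only the fixed-point equation and $P^-\succeq0$. You should state $P^-\succeq0$ explicitly when you drop $\Phi^nP^-(\Phi^\top)^n$; it holds because $P^-$ is an error covariance. With that, the argument shows that every positive semidefinite solution of the filter Riccati equation is positive definite whenever $(A,B)$ is controllable and $DD^\top\succ0$.

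The Joseph-form and information-form steps are equally strong, since both rely on $P^-\succ0$ and $DD^\top\succ0$. The Joseph form simply avoids invoking the matrix inversion lemma.
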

\begin{proof}
Since $A$ is Schur stable under Assumption~\ref{asm:2}, the pair $(A,C)$ is detectable. Together with the controllability of $(A,B)$ and the condition $DD^{\top}\succ 0$, Theorem~13.1.3 in~\cite{lancaster1995algebraic} implies that the steady-state prior error covariance satisfies $P^{-}\succ 0$. Moreover, the steady-state posterior error covariance can be written as
\begin{align*}
P^{\mathrm{KF}} = 
\left((P^-)^{-1}+C^\top(DD^\top)^{-1}C\right)^{-1}.
\end{align*}
Since $(P^-)^{-1}\succ0$ and $C^\top(DD^\top)^{-1}C\succeq0$, it follows that $P^{\mathrm{KF}}\succ0$.
\end{proof}

% The positive definiteness of $P^{\mathrm{KF}}$ is important for the privacy analysis below, particularly in the absence of release noise.

%  If $P^{\mathrm{KF}}\succ0$, then every nonzero linear component of the state retains some uncertainty after observing the Kalman estimate, which is sufficient for privacy of the Kalman filter. We therefore first provide a condition that guarantees $P^{\mathrm{KF}}\succ0$.

% \begin{seclem}
% \label{lem:prior_pd}
% Suppose Assumption~\ref{asm:2} holds and $DD^\top \succ 0$. Then, it holds that
% \begin{align*}
% P^{\mathrm{KF}} \succ 0.
% \end{align*}
% \end{seclem}
% \begin{proof}
% See Appendix~\ref{app:lem}.
% \end{proof}

Let $\hat{x}_k^{\mathrm{KF}}$ denote the steady-state Kalman estimate. The corresponding released signal is
\begin{align}
\label{eq:kf_release}
    \hat{z}_k^{\mathrm{KF}} = H\hat{x}_k^{\mathrm{KF}} + \xi_k .
\end{align}
Since the Kalman estimation error has steady-state covariance $P^{\mathrm{KF}}$ and $\xi_k$ is independent of the estimator, the steady-state release error covariance is
\begin{align*}
\Cov(z_\infty-\hat z_\infty^{\mathrm{KF}})
=
HP^{\mathrm{KF}}H^\top+\Xi .
\end{align*}

% With this privacy criterion in place, the central objective is to understand whether estimation accuracy and PML privacy are compatible, and if not, how the estimator should be redesigned. This leads to the following problem. 

 % The Kalman filter release $\hat z_k^{\mathrm{KF}}$ remains optimal in the minimum mean-square error sense among releases of the form $H\hat x_k+\xi_k$ with the same release noise covariance $\Xi$. However, a more accurate released estimate may also carry more information about the secret variable. This motivates the following question: although the Kalman filter release is optimal for estimation, how much information does it reveal about $\zeta_k$?

% This section addresses the first part of Problem~\ref{prob:main}. We
% analyze the PML privacy induced by the steady-state Kalman filter
% and show that, under a structural condition{\color{blue}YK: what do you mean by a structural condition?}, it is shown to be the least private
% estimator among the considered stable linear estimators when no release
% noise is added. 

\subsection{Privacy Limitations}

Although the Kalman filter estimate is optimal for estimation, its privacy performance must be evaluated separately. We therefore next characterize the PML privacy loss induced by Gaussian steady-state releases.

To simplify the presentation of the forthcoming result, we introduce the following notation:
\begin{align}
    \label{eq:steady_zeta}
    &\Sigma_{\zeta\zeta} := \Cov(\zeta_\infty) = F\Sigma_{\infty}F^{\top}, \\
    \nonumber
&\Gamma_{\zeta\zeta} := \Cov(\zeta_\infty \mid \hat z_\infty).
\end{align}
The following lemma characterizes $(\varepsilon,\delta)$-PML privacy for jointly Gaussian steady-state variables.

\begin{seclem}
\label{lem:gaussian_pml_characterization}
Suppose that Assumptions~\ref{asm:rank_F}--\ref{asm:2} hold. For any finite $\varepsilon \ge 0$ and $\delta \in (0,1)$, the estimator characterized by~\eqref{eq:estimator} and~\eqref{eq:noise} with gain $K$ and noise covariance $\Xi$ preserves $(\varepsilon,\delta)$-PML privacy for the secret variable $\zeta_{\infty}$ if and only if $\Gamma_{\zeta\zeta} \succ 0$ and
\begin{align}
\label{eq:lem_privacy}
\mathbf{F}^{-1}_{\chi^2_\ell}(1-\delta)
\le
2\varepsilon-\log\det(\Sigma_{\zeta\zeta})+\log\det(\Gamma_{\zeta\zeta}),
\end{align}
where $\ell := \operatorname{rank}(\Sigma_{\zeta\hat z})$ and $
\Sigma_{\zeta\hat z} := \Cov(\zeta_\infty, \hat z_\infty)$.
\end{seclem}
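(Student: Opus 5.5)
Since the estimator is stable and all noises are Gaussian, $(\zeta_\infty,\hat z_\infty)$ is jointly Gaussian and zero-mean. Write $\Sigma_{\hat z\hat z}:=\Cov(\hat z_\infty)$. By the standard Gaussian conditioning formulas, $\zeta_\infty\mid\hat z_\infty=\hat z$ is Gaussian with mean $\mu(\hat z)=\Sigma_{\zeta\hat z}\Sigma_{\hat z\hat z}^\dagger\hat z$ and covariance
\[
\Gamma_{\zeta\zeta}=\Sigma_{\zeta\zeta}-\Sigma_{\zeta\hat z}\Sigma_{\hat z\hat z}^\dagger\Sigma_{\hat z\zeta}.
\]
The covariance does not depend on $\hat z$. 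By Assumptions~\ref{asm:rank_F}--\ref{asm:2}, $\Sigma_\infty\succ0$ and $F$ has full row rank, so $\Sigma_{\zeta\zeta}\succ0$ and the prior density is nondegenerate. I would split the proof into the cases $\Gamma_{\zeta\zeta}$ singular and $\Gamma_{\zeta\zeta}\succ0$.

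\emph{Singular case.} If $\Gamma_{\zeta\zeta}$ is singular, the posterior is supported on a proper affine subspace, which has Lebesgue measure zero, and has no density with respect to Lebesgue measure. The essential supremum of the density ratio in \eqref{eq:PML} is then $+\infty$ for every $\hat z$. Hence $\ell(\zeta_\infty\to\hat z_\infty)=\infty$ almost surely, and \eqref{eq:PML_privacy_sys} fails for any finite $\varepsilon$ and $\delta<1$. This proves that $\Gamma_{\zeta\zeta}\succ0$ is necessary.

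\emph{Nonsingular case.} For $\Gamma_{\zeta\zeta}\succ0$, I would compute the log-ratio of the two Gaussian densities explicitly:
\[
\tfrac12\log\det\Sigma_{\zeta\zeta}-\tfrac12\log\det\Gamma_{\zeta\zeta}-\tfrac12(\zeta-\mu)^\top\Gamma_{\zeta\zeta}^{-1}(\zeta-\mu)+\tfrac12\zeta^\top\Sigma_{\zeta\zeta}^{-1}\zeta .
\]
Since $\Gamma_{\zeta\zeta}\preceq\Sigma_{\zeta\zeta}$, we have $\Gamma_{\zeta\zeta}^{-1}-\Sigma_{\zeta\zeta}^{-1}\succeq0$, so this expression is a concave quadratic in $\zeta$. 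I would first handle the case $\Gamma_{\zeta\zeta}^{-1}-\Sigma_{\zeta\zeta}^{-1}\succ0$, where the supremum is attained at a unique maximizer and can be evaluated in closed form. The result should be
\[
\ell=\tfrac12\log\frac{\det\Sigma_{\zeta\zeta}}{\det\Gamma_{\zeta\zeta}}+\tfrac12\,\mu^\top(\Sigma_{\zeta\zeta}-\Gamma_{\zeta\zeta})^{-1}\mu ,
\]
using the identity $\Gamma^{-1}(\Gamma^{-1}-\Sigma^{-1})^{-1}\Sigma^{-1}=(\Sigma-\Gamma)^{-1}$. When $\Sigma_{\zeta\zeta}-\Gamma_{\zeta\zeta}$ is singular, I would replace the inverse by a pseudoinverse. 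This step needs care: $\mu$ always lies in $\operatorname{Im}(\Sigma_{\zeta\zeta}-\Gamma_{\zeta\zeta})$, so the quadratic is bounded above along the kernel directions and the pseudoinverse formula still holds.

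\emph{Distribution of the leakage.} Next I would identify the law of the quadratic term. Note that $\Cov(\mu)=\Sigma_{\zeta\hat z}\Sigma_{\hat z\hat z}^\dagger\Sigma_{\hat z\zeta}=\Sigma_{\zeta\zeta}-\Gamma_{\zeta\zeta}=:M$. Hence $\mu^\top M^\dagger\mu\sim\chi^2_{\rank M}$, by the standard fact that $g^\top\Cov(g)^\dagger g$ is chi-squared with $\rank\Cov(g)$ degrees of freedom for a zero-mean Gaussian $g$. It remains to show $\rank M=\rank(\Sigma_{\zeta\hat z})=\ell$. This follows because $\Sigma_{\zeta\hat z}\Sigma_{\hat z\hat z}^\dagger\Sigma_{\hat z\zeta}$ has the same rank as $\Sigma_{\zeta\hat z}(\Sigma_{\hat z\hat z}^{\dagger})^{1/2}$, and $\operatorname{Im}(\Sigma_{\hat z\zeta})\subseteq\operatorname{Im}(\Sigma_{\hat z\hat z})$ since the joint covariance is positive semidefinite.

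\emph{Conclusion.} The event $\ell\le\varepsilon$ is equivalent to
\[
\chi^2_\ell\le 2\varepsilon-\log\det\Sigma_{\zeta\zeta}+\log\det\Gamma_{\zeta\zeta}.
\]
By continuity and monotonicity of the chi-squared CDF, its probability is at least $1-\delta$ if and only if \eqref{eq:lem_privacy} holds. The degenerate case $\ell=0$ is consistent with the convention $\fF^{-1}_{\chi^2_0}(1-\delta)=0$, since then $\mu=0$ and the leakage is deterministic. I expect the main obstacle to be the careful treatment of the supremum when $M$ is singular, together with the rank identity.
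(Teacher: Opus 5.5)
Your proof is correct. Your singular case coincides with the paper's argument: a singular $\Gamma_{\zeta\zeta}$ puts the posterior on an affine set of prior measure zero, so the leakage is infinite for every realization and the probability in \eqref{eq:PML_privacy_sys} is $0<1-\delta$.

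The difference lies in the nonsingular case. The paper computes nothing there. It observes that prior and posterior are nondegenerate Gaussians and invokes Theorem~3.3 of~\cite{liu2025privacy} for both directions. You instead rederive that result from scratch, in three steps:
\begin{itemize}
\item the closed form $\ell(\zeta_\infty\to\hat z)=\tfrac12\log\det\Sigma_{\zeta\zeta}-\tfrac12\log\det\Gamma_{\zeta\zeta}+\tfrac12\mu^\top M^\dagger\mu$, with $M=\Sigma_{\zeta\zeta}-\Gamma_{\zeta\zeta}$;
\item the law $\mu^\top M^\dagger\mu\sim\chi^2_{\rank M}$;
\item the identity $\rank M=\rank\Sigma_{\zeta\hat z}$.
\end{itemize}

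The step you flagged as delicate can be settled in one line. Write $\mu=Mw$. The point $\zeta^\star=\Sigma_{\zeta\zeta}w$ satisfies the stationarity condition $(\Gamma_{\zeta\zeta}^{-1}-\Sigma_{\zeta\zeta}^{-1})\zeta^\star=\Gamma_{\zeta\zeta}^{-1}\mu$, since both sides equal $\Gamma_{\zeta\zeta}^{-1}\Sigma_{\zeta\zeta}w-w$. Because $\zeta^\star-\mu=\Gamma_{\zeta\zeta}w$, the concave objective there equals $\tfrac12 w^\top Mw=\tfrac12\mu^\top M^\dagger\mu$. This holds whether or not $M$ is invertible.

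The paper's route is much shorter, but it rests on the cited theorem's hypotheses fitting the present setting. Your route is self-contained and shows where $\ell=\rank(\Sigma_{\zeta\hat z})$ comes from. It also treats a singular $\Cov(\hat z_\infty)$ explicitly through pseudoinverses, which is exactly the situation the remark following the lemma emphasizes.

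For completeness, add two small points. First, the density ratio is continuous and positive, so the essential supremum equals the supremum. Second, when $2\varepsilon-\log\det\Sigma_{\zeta\zeta}+\log\det\Gamma_{\zeta\zeta}<0$ (possible only if $\ell\ge1$), the probability is zero, which is still consistent with the stated equivalence.
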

\begin{proof}
See Appendix~\ref{app:lem1}.
\end{proof}

\begin{secrem}
    Compared with Theorem~3.3 in~\cite{liu2025privacy}, Lemma~\ref{lem:gaussian_pml_characterization} is tailored to the present estimator-release setting, where nondegeneracy of the released signal cannot be assumed a priori. Indeed, the covariance of $\hat z_\infty$ depends on the design variables $K$ and $\Xi$, as well as on $H$, and may be singular for some admissible designs. The lemma accounts for this possibility through $\Gamma_{\zeta\zeta}\succ0$ as the condition for a finite PML guarantee. \red    
\end{secrem}

Since both $\ell$ and $\Gamma_{\zeta\zeta}$ depend on the estimator gain $K$, Lemma~\ref{lem:gaussian_pml_characterization} shows that the choice of estimator directly affects PML privacy. We next specialize this characterization to the Kalman-filter estimate. We first consider the noiseless case, i.e., $\Xi=0$, to isolate the privacy effect of the estimator itself before introducing artificial release noise as a privacy-preserving mechanism. In this case, any privacy loss arises solely from Kalman filter gain $K^{\mathrm{KF}}$. The following proposition characterizes the resulting $(\varepsilon,\delta)$-PML privacy guarantee.

\begin{secprop}
\label{prop:kf_eps_delta_generalF}
Suppose that Assumptions~\ref{asm:rank_F},~\ref{asm:2} and~\ref{asm:3} hold.  
Then, the estimator characterized by~\eqref{eq:estimator} and~\eqref{eq:noise} with gain $K = K^{\mathrm{KF}}$ and noise covariance $\Xi = 0$ preserves $(\varepsilon,\delta)$-PML privacy for the secret variable $\zeta_{\infty}$ if and only if the pair $(\varepsilon, \delta)$ satisfies
\begin{align}
\label{eq:prop_state}
2\varepsilon
\ge
\mathbf{F}^{-1}_{\chi^2_{\ell^{\mathrm{KF}}}}(1-\delta)
+\log\det(\Sigma_{\zeta\zeta})
 -\log\det(\Gamma_{\zeta\zeta}^{\mathrm{KF}}),
\end{align}
where
\begin{align*}
&\ell^{\mathrm{KF}}
:=
\rank\bigl(F(\Sigma_\infty-P^{\mathrm{KF}}) H^{\top}\bigr), \\
&\Gamma_{\zeta\zeta}^{\mathrm{KF}}:=
F\Sigma_\infty F^\top
-
F(\Sigma_\infty-P^{\mathrm{KF}})H^\top \\
&\times \Bigl(H(\Sigma_\infty-P^{\mathrm{KF}})H^\top \Bigr)^\dagger
H(\Sigma_\infty-P^{\mathrm{KF}})F^\top,
\end{align*}
$\Sigma_{\infty}$ is the solution of~\eqref{eq:steady} and $P^{\mathrm{KF}}$ is defined in~\eqref{eq:PKF}.
\end{secprop}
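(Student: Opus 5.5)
The plan is to specialize Lemma~\ref{lem:gaussian_pml_characterization} to $K=K^{\mathrm{KF}}$ and $\Xi=0$. This requires computing the steady-state joint covariance of $(\zeta_\infty,\hat z_\infty^{\mathrm{KF}})$ explicitly and then verifying the condition $\Gamma_{\zeta\zeta}\succ0$ so that the lemma's equivalence reduces to \eqref{eq:prop_state}.

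\textbf{Covariances.} The key tool is the orthogonality principle of the steady-state Kalman filter: the estimation error $e_\infty:=x_\infty-\hat x^{\mathrm{KF}}_\infty$ is uncorrelated with $\hat x^{\mathrm{KF}}_\infty$, since $\hat x^{\mathrm{KF}}_\infty$ is the conditional mean given the past measurements (in steady state, with all variables jointly Gaussian and zero mean). Hence $\Sigma_\infty=\Cov(\hat x^{\mathrm{KF}}_\infty)+P^{\mathrm{KF}}$, and
\begin{align*}
\Cov(x_\infty,\hat x^{\mathrm{KF}}_\infty)=\Cov(\hat x^{\mathrm{KF}}_\infty)=\Sigma_\infty-P^{\mathrm{KF}}.
\end{align*}
With $\Xi=0$ this yields
\begin{align*}
\Sigma_{\zeta\hat z}&=F(\Sigma_\infty-P^{\mathrm{KF}})H^\top,\\
\Cov(\hat z_\infty)&=H(\Sigma_\infty-P^{\mathrm{KF}})H^\top.
\end{align*}
Consequently $\ell=\ell^{\mathrm{KF}}$. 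By the Gaussian conditioning formula with a pseudoinverse, which is valid for a possibly singular $\Cov(\hat z_\infty)$ because $\operatorname{Im}(\Sigma_{\hat z\zeta})\subseteq\operatorname{Im}(\Cov(\hat z_\infty))$ for jointly Gaussian vectors, we get $\Gamma_{\zeta\zeta}=\Gamma^{\mathrm{KF}}_{\zeta\zeta}$.

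One point needs care here. The justification must hold for the time-invariant filter started at steady state, so the "conditional mean" argument should be phrased as the limit of the time-varying Kalman filter. Alternatively, one can check directly that the stationary cross-covariance $\Cov(e_\infty,\hat x_\infty)$ solves a homogeneous Stein equation $X=(I-KC)A\,X\,A^\top(I-KC)^\top$ and therefore vanishes, because $\rho((I-KC)A)<1$. I would pursue the latter route since it is self-contained.

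\textbf{Positivity of $\Gamma^{\mathrm{KF}}_{\zeta\zeta}$.} This is the main obstacle. Because $\hat z_\infty$ is a function of $\hat x_\infty^{\mathrm{KF}}$, conditioning on $\hat z_\infty$ reveals no more than conditioning on $\hat x^{\mathrm{KF}}_\infty$. Therefore
\begin{align*}
\Gamma_{\zeta\zeta}\succeq\Cov(\zeta_\infty\mid\hat x^{\mathrm{KF}}_\infty).
\end{align*}
By orthogonality, the linear MMSE estimate of $x_\infty$ from $\hat x^{\mathrm{KF}}_\infty$ is $\hat x^{\mathrm{KF}}_\infty$ itself, so $\Cov(x_\infty\mid\hat x^{\mathrm{KF}}_\infty)=P^{\mathrm{KF}}$. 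It follows that
\begin{align*}
\Gamma_{\zeta\zeta}\succeq FP^{\mathrm{KF}}F^\top\succ0,
\end{align*}
using Lemma~\ref{lem:prior_pd} and the full row rank of $F$ from Assumption~\ref{asm:rank_F}.

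\textbf{Conclusion.} With $\Gamma_{\zeta\zeta}\succ0$ established, Lemma~\ref{lem:gaussian_pml_characterization} states that privacy holds iff $\mathbf F^{-1}_{\chi^2_{\ell^{\mathrm{KF}}}}(1-\delta)\le 2\varepsilon-\log\det\Sigma_{\zeta\zeta}+\log\det\Gamma^{\mathrm{KF}}_{\zeta\zeta}$. Rearranging this inequality gives \eqref{eq:prop_state}, using $\Sigma_{\zeta\zeta}=F\Sigma_\infty F^\top$ from \eqref{eq:steady_zeta}.
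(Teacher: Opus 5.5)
Your proposal is correct and follows the paper's proof. Orthogonality gives $\Cov(x_\infty,\hat x^{\mathrm{KF}}_\infty)=\Cov(\hat x^{\mathrm{KF}}_\infty)=\Sigma_\infty-P^{\mathrm{KF}}$, the pseudoinverse conditioning formula yields $\Gamma^{\mathrm{KF}}_{\zeta\zeta}$ and $\ell^{\mathrm{KF}}$, and $\Gamma^{\mathrm{KF}}_{\zeta\zeta}\succeq FP^{\mathrm{KF}}F^\top\succ0$ reduces Lemma~\ref{lem:gaussian_pml_characterization} to \eqref{eq:prop_state}. The paper gets this last bound algebraically, by splitting $\Sigma_\infty=(\Sigma_\infty-P^{\mathrm{KF}})+P^{\mathrm{KF}}$ and noting that the generalized Schur complement is positive semidefinite; that is just the matrix form of your conditioning argument.

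One small correction to your optional self-contained route. The stationary cross-covariance $X=\Cov(e_\infty,\hat x^{\mathrm{KF}}_\infty)$ satisfies $X=(I-KC)A\,X\,A^\top$, not your two-sided version, because the estimate propagates through $A$ plus the innovation rather than through $(I-KC)A$. The forcing term vanishes exactly because $K=K^{\mathrm{KF}}$ and $P^-=AP^{\mathrm{KF}}A^\top+BB^\top$, and $X=0$ then follows from $\rho((I-KC)A)\,\rho(A)<1$.
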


\begin{proof}
See Appendix~\ref{app:prop3_4}.
\end{proof}

Proposition~\ref{prop:kf_eps_delta_generalF} provides an explicit characterization of the PML privacy achieved by the steady-state Kalman filter in the absence of release noise. The result shows that the privacy parameters are determined by the steady-state error covariance $P^{\mathrm{KF}}$. Hence, the estimation accuracy of the Kalman filter directly affects the amount of information that the released estimate may reveal about the secret variable.

 We next compare it with estimators obtained from other choices of the gain matrix $K$. This comparison is not immediate, because privacy is specified by the two-parameter pair $(\varepsilon,\delta)$ rather than by a single scalar quantity. To enable a meaningful comparison, we therefore fix $\delta \in (0,1)$ and define the privacy parameter associated with $K$ by
\begin{align}
\varepsilon(K, \delta)
:=&
\frac{1}{2}\mathbf{F}^{-1}_{\chi^2_\ell}(1-\delta)
 \nonumber\\&+\frac{1}{2}\log\det(\Sigma_{\zeta\zeta})
-\frac{1}{2}\log\det(\Gamma_{\zeta\zeta}). \label{eq:varepsilonK}
\end{align}
Then, an estimator of the form~\eqref{eq:estimator} with gain $K_1$ is said to be more private than another estimator with gain $K_2$ if
\begin{align*}
\varepsilon(K_1, \delta) \le \varepsilon(K_2, \delta).
\end{align*}

We are now ready to state the main privacy anti-optimality result for the Kalman filter. The following theorem shows that, under the structural condition $F=LH$ and in the absence of release noise, the Kalman filter is the least private estimator among stable linear time-invariant estimators of the form~\eqref{eq:estimator}.

\begin{secthm}
\label{thm:kalman_lp}
Suppose that Assumptions~\ref{asm:rank_F},~\ref{asm:2} and~\ref{asm:3} hold.  
Also assume that there exists a matrix $L \in \bR^{r \times s}$ such that $F = LH$
and that no release noise is added in~\eqref{eq:noise}, i.e., $\Xi =0$.
Then, the steady-state Kalman filter admits a finite value of $\varepsilon(K^{\mathrm{KF}},\delta)$ and is the least private estimator in the class of stable linear time-invariant estimators of the form~\eqref{eq:estimator}; that is,
\begin{align*}
\varepsilon(K, \delta) \le \varepsilon(K^{\mathrm{KF}}, \delta),
\qquad
\forall\, K\in\mathcal K, \delta \in (0,1),
\end{align*}
where $\mathcal K:=\{K\in\mathbb R^{n\times m}: A-KCA \text{ is Schur stable}\}.$
\end{secthm}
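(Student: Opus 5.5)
The plan is to compare, for every $K\in\mathcal K$, the two ingredients of $\varepsilon(K,\delta)$ in~\eqref{eq:varepsilonK} separately. The first is the conditional covariance $\Gamma_{\zeta\zeta}(K)$ of the secret given the release. The second is the rank $\ell(K)=\rank(\Sigma_{\zeta\hat z}(K))$, which enters through the quantile $\mathbf F^{-1}_{\chi^2_\ell}(1-\delta)$. That quantile is nondecreasing in $\ell$ for fixed $\delta$, with the convention $\mathbf F^{-1}_{\chi^2_0}(1-\delta)=0$. It therefore suffices to prove two inequalities: $\Gamma_{\zeta\zeta}(K)\succeq \Gamma^{\mathrm{KF}}_{\zeta\zeta}\succ 0$, which gives $\log\det\Gamma_{\zeta\zeta}(K)\ge\log\det\Gamma^{\mathrm{KF}}_{\zeta\zeta}$, and $\ell(K)\le \ell^{\mathrm{KF}}$. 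Then~\eqref{eq:varepsilonK} gives $\varepsilon(K,\delta)\le\varepsilon(K^{\mathrm{KF}},\delta)$ for every $\delta\in(0,1)$. The first inequality also shows that $\Gamma_{\zeta\zeta}(K)\succ0$, so Lemma~\ref{lem:gaussian_pml_characterization} applies and every $\varepsilon(K,\delta)$ is finite.

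First I set up the stationary picture. For $K\in\mathcal K$, the pair $(x_k,\hat x_k)$ converges to a jointly Gaussian stationary limit. In that limit $\hat x_\infty$, and hence $\hat z_\infty=H\hat x_\infty$ (since $\Xi=0$), is a linear functional of the infinite past measurement sequence; call the generated $\sigma$-algebra $\mathcal Y_\infty$. The steady-state Kalman estimate satisfies $\hat x^{\mathrm{KF}}_\infty=\bE[x_\infty\mid\mathcal Y_\infty]$, with error covariance $P^{\mathrm{KF}}$.

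Making this limit rigorous is where I expect the main obstacle. I would handle it with a finite-horizon argument:
\begin{enumerate}
\item For each finite $k$, $H\hat x_k$ is a linear function of $(\hat x_0,y_1,\dots,y_k)$.
\item Conditioning on more information can only decrease the Gaussian conditional covariance.
\item The finite-horizon Kalman posterior covariance converges to $P^{\mathrm{KF}}$.
\item Conditional covariances of jointly Gaussian vectors depend continuously on the joint covariance as long as the limit is nondegenerate. This holds here by Lemma~\ref{lem:prior_pd}.
\end{enumerate}

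The covariance inequality then follows from monotonicity of Gaussian conditioning:
\begin{align*}
\Gamma_{\zeta\zeta}(K)=\Cov(Fx_\infty\mid H\hat x_\infty)\succeq \Cov(Fx_\infty\mid\mathcal Y_\infty)=FP^{\mathrm{KF}}F^\top .
\end{align*}
For the Kalman filter the structural condition $F=LH$ makes this an equality. Indeed, $\bE[Fx_\infty\mid\mathcal Y_\infty]=LH\hat x^{\mathrm{KF}}_\infty$ is a function of $\hat z^{\mathrm{KF}}_\infty$, and $\sigma(\hat z^{\mathrm{KF}}_\infty)\subseteq\mathcal Y_\infty$. The tower property then gives $\Gamma^{\mathrm{KF}}_{\zeta\zeta}=FP^{\mathrm{KF}}F^\top$. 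This matrix is positive definite by Lemma~\ref{lem:prior_pd} and Assumption~\ref{asm:rank_F}, which yields the finiteness claim for $\varepsilon(K^{\mathrm{KF}},\delta)$. It also matches the formula in Proposition~\ref{prop:kf_eps_delta_generalF}, as a consistency check.

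For the ranks, write $g=\bE[Fx_\infty\mid\mathcal Y_\infty]$. Since $\hat z_\infty$ is $\mathcal Y_\infty$-measurable, $\Cov(Fx_\infty,\hat z_\infty)=\Cov(g,\hat z_\infty)$. Any cross-covariance with $g$ has rank at most $\rank\Cov(g)=\rank F(\Sigma_\infty-P^{\mathrm{KF}})F^\top$, so $\ell(K)\le\rank F(\Sigma_\infty-P^{\mathrm{KF}})F^\top$. On the other side, $\ell^{\mathrm{KF}}=\rank F(\Sigma_\infty-P^{\mathrm{KF}})H^\top$. Right-multiplying by $L^\top$ and using $F=LH$ gives $\ell^{\mathrm{KF}}\ge\rank F(\Sigma_\infty-P^{\mathrm{KF}})F^\top$. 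Combining the two bounds gives $\ell(K)\le\ell^{\mathrm{KF}}$, which completes the comparison.
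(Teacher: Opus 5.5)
Your proposal is correct and is essentially the paper's proof in Appendix~\ref{app:1}: Kalman optimality over the measurement history together with $F=LH$ gives $\Gamma_{\zeta\zeta}\succeq\Gamma_{\zeta\zeta}^{\mathrm{KF}}=FP^{\mathrm{KF}}F^\top\succ0$, the orthogonality principle gives $\ell\le\ell^{\mathrm{KF}}$ (your bound through $\rank\Cov(F\hat x_\infty^{\mathrm{KF}})$ is only a repackaging of the paper's image-inclusion argument), and monotonicity of the $\chi^2$ quantile combines the two. One correction to your rigorization step (iv): Lemma~\ref{lem:prior_pd} does not make $\Cov(\hat z_\infty)$ nondegenerate (e.g.\ $K=0\in\mathcal K$ yields $\hat z_\infty=0$), but you only need the one-sided passage to the limit, and this holds because $a^\top\Cov(\zeta\mid\hat z)\,a=\inf_b\Cov(a^\top\zeta-b^\top\hat z)$ is an infimum of linear functions of the joint covariance and is therefore upper semicontinuous.
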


\begin{proof}
See Appendix~\ref{app:1}.
\end{proof}

Theorem~\ref{thm:kalman_lp} reveals a privacy anti-optimality property of the Kalman filter. Under the structural condition $F=LH$ and without release noise, the steady-state Kalman filter maximizes the required PML privacy budget $\varepsilon(K,\delta)$ for every fixed $\delta\in(0,1)$ among all stable linear estimators of the form~\eqref{eq:estimator}. Thus, although the Kalman filter is optimal for estimation, it is the least favorable estimator in this class from the PML-privacy perspective. In particular, the condition $F=LH$ holds in the full-release case $H=I_n$, showing that publishing the full Kalman state estimate can be highly undesirable for privacy.

\begin{secrem}
    The condition $F=LH$ is essential for
Theorem~\ref{thm:kalman_lp}. It ensures that the secret variable
$\zeta_k=Fx_k$ is a linear function of the released variable
$z_k=Hx_k$. Hence, a more accurate estimate of $z_k$ also yields a more
accurate estimate of $\zeta_k$, which allows the Kalman filter's
minimum-error property to be translated into maximal PML leakage. Without
this condition, estimation accuracy for $z_k$ and privacy leakage about
$\zeta_k$ need not be aligned, and the Kalman filter is not necessarily
the least private estimator.
\end{secrem}

% {In general, when $H \neq I$, the Kalman filter is not necessarily the least private estimator, since another filter may disclose more relevant information about $\zeta_{\infty}$. However, when $H = LF$ for some $L \in \bR^{s \times r}$, the above result continues to hold, and the proof follows in a similar manner.}

% {In general, the condition $\Cov(\hat{x}_{\infty}) \succ 0$ does not hold automatically. The following lemma provides a sufficient condition under which this property is guaranteed.}

% \begin{seclem}
% \label{lem:positive_cov}
% Assume that $A$ is Schur stable and $(A,B)$ is controllable. Let $\Sigma_\infty$ denote the unique steady-state covariance matrix satisfying
% \begin{align*}
%     \Sigma_\infty = A\Sigma_\infty A^\top + BB^\top.
% \end{align*}
% Then $\Sigma_\infty \succ 0$. Define the backward state-transition matrix by
% \begin{align*}
%     A_b := \Sigma_\infty A^\top \Sigma_\infty^{-1}.
% \end{align*}
% If, in addition, $(A_b,C)$ is observable, then the steady-state Kalman estimate satisfies
% \begin{align*}
%     \Cov(\hat{x}_\infty^{\mathrm{KF}}) \succ 0.
% \end{align*}
% \end{seclem}

% \begin{proof}
% See Appendix~\ref{app:2}
% \end{proof}

\section{Design of Privacy-preserving Estimators}
\label{sec:filter}
This section addresses  Problem~\ref{prob:1}, namely how to design the estimator gain and release noise covariance to satisfy a prescribed PML privacy constraint while maintaining estimation performance. However, we first establish a fundamental limitation: under the structural condition $F=LH$, any estimator satisfying $(\varepsilon,\delta)$-PML privacy must incur a nontrivial steady-state estimation error. We then turn to constructive privacy-preserving estimator designs, which do not rely on the condition $F = LH$. Specifically, we provide sufficient conditions on release noise covariance $\Xi$ in the case $K = K^{\mathrm{KF}}$. Then, we jointly design the estimator gain and release-noise covariance.

Before presenting the lower bound and the design results, we introduce a standing feasibility condition on the privacy parameters. 
By Corollary~3.4 of~\cite{liu2025privacy}, an $(\varepsilon,\delta)$-PML private release can exist only if
\begin{align}
\label{eq:privacy_feasibility_condition_ell}
\varepsilon
>
\frac{1}{2}\fF^{-1}_{\chi^2_\ell}(1-\delta),
\end{align}
where $\ell = \rank(\Sigma_{\zeta\hat z})$ is defined in Lemma~\ref{lem:gaussian_pml_characterization}. However, this rank generally depends on the estimator gain and $H$, and therefore is not fixed a priori. To obtain a uniform condition that is independent of the particular estimator, we impose the following stronger standing assumption.
\begin{secasm} 
\label{asm:privacy}
 The pair $(\varepsilon, \delta)$ satisfies the following inequality:
    \begin{align}
\label{eq:privacy_feasibility_condition}
\varepsilon
>
\frac{1}{2}\fF^{-1}_{\chi^2_r}(1-\delta),
\end{align}
where $r$ is the dimension of the secret variable $\zeta_k=Fx_k$.
\end{secasm}

Since $\ell\le r$ and $\fF^{-1}_{\chi^2_j}(1-\delta)$ is nondecreasing in $j$, condition~\eqref{eq:privacy_feasibility_condition} implies~\eqref{eq:privacy_feasibility_condition_ell} for any possible rank $\ell$. For notational convenience, for each $j\in\{0,1,\ldots,r\}$, define
\begin{align}
\label{eq:kappa}
\kappa_j(\varepsilon,\delta)
:=
\exp\left(
\frac{\fF^{-1}_{\chi^2_j}(1-\delta)-2\varepsilon}{r}
\right).
\end{align}
It follows directly from~\eqref{eq:privacy_feasibility_condition} that
\begin{align*}
0<\kappa_1(\varepsilon,\delta) < \kappa_2(\varepsilon,\delta)< \cdots < \kappa_r(\varepsilon,\delta) <1.
\end{align*}

With the feasibility condition and the quantities $\kappa_j(\varepsilon,\delta)$ defined above, we are ready to state the first result of this section, which quantifies the estimation-error penalty imposed by PML privacy under the structural condition $F=LH$.

\subsection{Estimation Lower Bound Induced by $(\varepsilon,\delta)$-PML Privacy}
The following proposition is derived from Lemma~\ref{lem:gaussian_pml_characterization}. 
It shows that, under the structural condition $F=LH$, PML privacy imposes a nontrivial lower bound on the steady-state estimation error.

\begin{secprop}
\label{prop:pml_error_lower_bound}
Suppose that Assumptions~\ref{asm:rank_F}--\ref{asm:2} hold. 
Assume that there exists an $L\in\bR^{r\times s}$ such that $F=LH$. 
Consider any stable estimator of the form~\eqref{eq:estimator}--\eqref{eq:noise} that preserves $(\varepsilon,\delta)$-PML privacy for $\zeta_\infty$, where the finite $\varepsilon \geq 0$ and $\delta \in (0,1)$ satisfy Assumption~\ref{asm:privacy}. 
Then,
\begin{align}
\label{eq:pml_error_lower_bound}
\tr\bigl(\Cov(z_\infty-\hat z_\infty)\bigr)
\ge
\frac{
r\det(\Sigma_{\zeta\zeta})^{1/r}\kappa_1(\varepsilon,\delta)
}{
\det(LL^\top)^{1/r}
}.
\end{align}
\end{secprop}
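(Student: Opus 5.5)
We combine the privacy characterization of Lemma~\ref{lem:gaussian_pml_characterization} with the observation that the conditional covariance of $\zeta_\infty$ given the release is dominated by the error covariance of any estimator of $\zeta_\infty$ built from $\hat z_\infty$. Under $F=LH$, the natural such estimator is $L\hat z_\infty$.

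\textbf{Step 1: upper bound on $\Gamma_{\zeta\zeta}$.} Since $(\zeta_\infty,\hat z_\infty)$ is jointly Gaussian and zero mean, $\Gamma_{\zeta\zeta}=\Cov(\zeta_\infty\mid\hat z_\infty)$ is the minimum mean-square error covariance among all linear functions of $\hat z_\infty$. Taking the particular estimator $L\hat z_\infty$ and using $\zeta_\infty=Fx_\infty=LHx_\infty=Lz_\infty$, we get
\begin{align*}
\Gamma_{\zeta\zeta}\preceq \Cov(\zeta_\infty-L\hat z_\infty)=L\,\Cov(z_\infty-\hat z_\infty)\,L^\top=:LEL^\top .
\end{align*}

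\textbf{Step 2: lower bound on $\det\Gamma_{\zeta\zeta}$.} By Lemma~\ref{lem:gaussian_pml_characterization}, privacy gives $\Gamma_{\zeta\zeta}\succ0$ and $\log\det\Gamma_{\zeta\zeta}\ge \fF^{-1}_{\chi^2_\ell}(1-\delta)-2\varepsilon+\log\det\Sigma_{\zeta\zeta}$. Since $\Gamma_{\zeta\zeta}\succ0$ and $\Sigma_{\zeta\zeta}\succ0$, we have $\Sigma_{\zeta\hat z}\neq0$ unless $\ell=0$. In either case, monotonicity of the quantile in the degrees of freedom should give $\fF^{-1}_{\chi^2_\ell}(1-\delta)\ge\fF^{-1}_{\chi^2_1}(1-\delta)$ when $\ell\ge1$.

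The case $\ell=0$ is the main obstacle. There the quantile is $0$, which is smaller than the $\chi^2_1$ quantile. I would handle it directly: $\ell=0$ means $\hat z_\infty$ is uncorrelated with $\zeta_\infty$, so $\Gamma_{\zeta\zeta}=\Sigma_{\zeta\zeta}$. Then $\det(LEL^\top)\ge\det\Sigma_{\zeta\zeta}\ge\det\Sigma_{\zeta\zeta}\,\kappa_1^r$, because $\kappa_1<1$. So in all cases
\begin{align*}
\det(LEL^\top)\ge\det\Gamma_{\zeta\zeta}\ge \det(\Sigma_{\zeta\zeta})\,\kappa_1(\varepsilon,\delta)^r .
\end{align*}
In particular $LEL^\top\succ0$, so $L$ has full row rank, consistent with Assumption~\ref{asm:rank_F}.

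\textbf{Step 3: from the determinant to the trace.} We need $\det(LEL^\top)\le \det(LL^\top)\,(\tr E/r)^r$. Write $E\preceq\lambda_{\max}(E)I$, which gives $\det(LEL^\top)\le\lambda_{\max}(E)^r\det(LL^\top)$, but this only controls $\lambda_{\max}$, not the trace.

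A sharper route uses $L=U[\,S\ 0\,]V^\top$ (SVD) and sets $E'=V^\top EV$ with leading $r\times r$ block $E'_{11}$. Then $\det(LEL^\top)=\det(S)^2\det(E'_{11})=\det(LL^\top)\det(E'_{11})$. By AM--GM, $\det(E'_{11})\le(\tr E'_{11}/r)^r\le(\tr E/r)^r$. Combining,
\begin{align*}
\tr E\ge r\left(\frac{\det\Sigma_{\zeta\zeta}\,\kappa_1^r}{\det(LL^\top)}\right)^{1/r}=\frac{r\det(\Sigma_{\zeta\zeta})^{1/r}\kappa_1(\varepsilon,\delta)}{\det(LL^\top)^{1/r}},
\end{align*}
which is \eqref{eq:pml_error_lower_bound}.

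Apart from the $\ell=0$ case in Step 2, the only care needed is verifying that the steady-state joint Gaussianity legitimizes the linear MMSE comparison in Step 1. This holds because $\hat z_\infty$ is a linear function of Gaussian noises.
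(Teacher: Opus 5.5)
Your proof is correct and follows the paper's argument: the MMSE domination $\Gamma_{\zeta\zeta}\preceq L\,\Cov(z_\infty-\hat z_\infty)L^\top$ via the estimator $L\hat z_\infty$, the same case split on $\ell$ (independence and $\kappa_1<1$ for $\ell=0$, quantile monotonicity for $\ell\ge1$), and AM--GM to pass from determinant to trace. The only difference is cosmetic: the paper uses $L^\top(LL^\top)^{-1}L\preceq I_s$ and applies AM--GM to $(LL^\top)^{-1/2}\Gamma_{\zeta\zeta}(LL^\top)^{-1/2}$, whereas you compare determinants and apply AM--GM to $E'_{11}$, which is the same computation in SVD coordinates because $\tr\bigl(EL^\top(LL^\top)^{-1}L\bigr)=\tr(E'_{11})$.
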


\begin{proof}
The proof is provided in Appendix~\ref{app:prop4_2}.
\end{proof}

The lower bound on the estimation error holds independently of the specific estimator gain $K$ and release noise covariance $\Xi$. Moreover, the bound depends explicitly on the privacy parameters through $\kappa_1(\varepsilon,\delta)$. A smaller privacy budget $\varepsilon$, or a smaller failure probability $\delta$, increases $\kappa_1(\varepsilon,\delta)$, thereby raising the lower bound and enforcing a larger estimation error.

\subsection{Noise Design for Privacy Preservation in the Kalman Filter}
Although the Kalman filter may not be ideal from a privacy perspective, it is widely used in practice. It is therefore important to design the noise covariance $\Xi$ to preserve privacy when the Kalman filter is employed. The following theorem provides a method to design $\Xi$ for the Kalman filter.
\begin{secthm}
\label{thm:noise_KF}
    Suppose that Assumptions~\ref{asm:rank_F}--\ref{asm:2} hold. For any finite $\varepsilon \geq 0$ and $\delta \in (0,1)$ satisfying Assumption~\ref{asm:privacy}, the estimator characterized by~\eqref{eq:estimator} and~\eqref{eq:noise} with gain $K = K^{\mathrm{KF}}$ and noise covariance $\Xi = \Xi^{\mathrm{KF}}$ preserves $(\varepsilon,\delta)$-PML privacy for the secret variable $\zeta_{\infty}$ if
    \begin{align}
        \Xi^{\mathrm{KF}} \succeq & \frac{\kappa_{\ell^{\mathrm{KF}}} (\varepsilon,\delta)}{1-\kappa_{\ell^{\mathrm{KF}}} (\varepsilon,\delta)} H (\Sigma_{\infty} -P^{\mathrm{KF}})\nonumber \\
        \times &F^{\top}\Sigma_{\zeta\zeta}^{-1}F (\Sigma_{\infty} -P^{\mathrm{KF}}) H^\top,  \label{eq:noise_XiKF}
    \end{align}
    where $\Sigma_{\infty}$ is defined in~\eqref{eq:steady},  $P^{\mathrm{KF}}$ is defined in~\eqref{eq:PKF} and $\ell^{\mathrm{KF}}$ is defined in Proposition~\ref{prop:kf_eps_delta_generalF}.
\end{secthm}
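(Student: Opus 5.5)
The plan is to verify the condition of Lemma~\ref{lem:gaussian_pml_characterization} directly: show that $\Gamma_{\zeta\zeta}\succ0$ and that inequality~\eqref{eq:lem_privacy} holds with $\ell=\ell^{\mathrm{KF}}$.

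First I compute the steady-state covariances for $K=K^{\mathrm{KF}}$. Orthogonality of the Kalman error to the estimate gives $\Cov(\hat x_\infty^{\mathrm{KF}})=\Sigma_\infty-P^{\mathrm{KF}}=:M$ and $\Cov(x_\infty,\hat x_\infty^{\mathrm{KF}})=M$. Since $\xi$ is independent of everything else,
\begin{align*}
\Sigma_{\zeta\hat z}=FMH^\top,\qquad \Cov(\hat z_\infty)=HMH^\top+\Xi.
\end{align*}
In particular, $\ell=\rank(FMH^\top)=\ell^{\mathrm{KF}}$ regardless of $\Xi$. By Gaussian conditioning with a pseudoinverse,
\begin{align*}
\Gamma_{\zeta\zeta}=\Sigma_{\zeta\zeta}-FMH^\top(HMH^\top+\Xi)^\dagger HMF^\top.
\end{align*}

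Write $\kappa:=\kappa_{\ell^{\mathrm{KF}}}(\varepsilon,\delta)\in(0,1)$ and $G:=HMF^\top$, so that the hypothesis reads $\Xi\succeq\frac{\kappa}{1-\kappa}G\Sigma_{\zeta\zeta}^{-1}G^\top$. Condition~\eqref{eq:lem_privacy} is equivalent to $\det(\Gamma_{\zeta\zeta})\ge\det(\Sigma_{\zeta\zeta})\kappa^r$. It therefore suffices to prove the matrix inequality $\Gamma_{\zeta\zeta}\succeq(1-c)\Sigma_{\zeta\zeta}$ for some $c$ with $1-c\ge\kappa$. This inequality also yields $\Gamma_{\zeta\zeta}\succ0$, because $\Sigma_{\zeta\zeta}\succ0$ by Assumptions~\ref{asm:rank_F}--\ref{asm:2}. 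After normalizing by $\Sigma_{\zeta\zeta}^{-1/2}$, the target becomes
\begin{align*}
T:=\Sigma_{\zeta\zeta}^{-1/2}G^\top(HMH^\top+\Xi)^\dagger G\,\Sigma_{\zeta\zeta}^{-1/2}\preceq(1-\kappa)I.
\end{align*}

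To bound $T$, I use $HMH^\top+\Xi\succeq\Xi\succeq\frac{\kappa}{1-\kappa}GG'^\top$ with $G':=G\Sigma_{\zeta\zeta}^{-1/2}$, together with the lower bound $HMH^\top\succeq G\Sigma_{\zeta\zeta}^{-1}G^\top=G'G'^\top$. The latter follows from the Schur complement of $\Cov\bigl((\zeta_\infty,H\hat x_\infty)\bigr)$ in the other direction: $HMH^\top-G\Sigma_{\zeta\zeta}^{-1}G^\top$ is a conditional covariance, hence PSD. Combining the two gives
\begin{align*}
HMH^\top+\Xi\succeq\Bigl(1+\frac{\kappa}{1-\kappa}\Bigr)G'G'^\top=\frac{1}{1-\kappa}G'G'^\top.
\end{align*}
Since $G'$ has range inside the range of $HMH^\top+\Xi$, the standard pseudoinverse monotonicity yields $G'^\top(HMH^\top+\Xi)^\dagger G'\preceq G'^\top\bigl(\tfrac{1}{1-\kappa}G'G'^\top\bigr)^\dagger G'=(1-\kappa)P_{\mathrm{Im}G'^\top}\preceq(1-\kappa)I$. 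Taking determinants then gives $\det\Gamma_{\zeta\zeta}\ge\kappa^r\det\Sigma_{\zeta\zeta}$, which by the definition~\eqref{eq:kappa} is exactly~\eqref{eq:lem_privacy}.

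The main obstacle is this pseudoinverse monotonicity step when the matrices are singular. I will handle it via the Schur-complement characterization: $\begin{bmatrix} S & G'\\ G'^\top & X\end{bmatrix}\succeq0$ if and only if $\mathrm{Im}\,G'\subseteq\mathrm{Im}\,S$ and $X\succeq G'^\top S^\dagger G'$. I apply this to $S=HMH^\top+\Xi$ and $X=(1-\kappa)I$, checking positive semidefiniteness using $S\succeq\frac{1}{1-\kappa}G'G'^\top$. The range inclusion holds because $\mathrm{Im}\,G\subseteq\mathrm{Im}(HMH^\top)$, which itself follows from $G$ being a cross-covariance with $H\hat x_\infty$.
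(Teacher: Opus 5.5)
Your proposal is correct and follows the same argument as the paper. The paper uses Lemma~\ref{lem:noise_add} and a Schur complement to reduce privacy to $H(\Sigma_\infty-P^{\mathrm{KF}})H^\top+\Xi\succeq\frac{1}{1-\kappa}G\Sigma_{\zeta\zeta}^{-1}G^\top$ (i.e.\ \eqref{eq:thm2_1}), and then closes with your covariance Schur-complement bound $H(\Sigma_\infty-P^{\mathrm{KF}})H^\top\succeq G\Sigma_{\zeta\zeta}^{-1}G^\top$. The only difference is that you prove the implication the paper imports via Lemma~\ref{lem:noise_add} (Theorem~3.6 of the cited work) yourself, by deriving $\Gamma_{\zeta\zeta}\succeq\kappa\Sigma_{\zeta\zeta}$ through the generalized Schur complement and then invoking Lemma~\ref{lem:gaussian_pml_characterization}; this makes the argument self-contained, handles a possibly singular $\Cov(\hat z_\infty)$ explicitly, and verifies $\Gamma_{\zeta\zeta}\succ0$ and $\ell=\ell^{\mathrm{KF}}$ independently of $\Xi$.
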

\begin{proof}
    The proof is provided in Appendix~\ref{app:2}.
\end{proof}

As indicated by the right-hand side of~\eqref{eq:noise_XiKF}, a smaller Kalman error covariance $P^{\mathrm{KF}}$ makes $\Sigma_\infty-P^{\mathrm{KF}}$ larger in the positive semidefinite sense, which increases the covariance between the released estimate and the secret variable. Consequently, a larger release noise covariance is required to satisfy the same $(\varepsilon,\delta)$-PML privacy. Since this approach fixes the gain at $K^{\mathrm{KF}}$ and enforces privacy only through output perturbation, the next subsection considers joint design of the estimator gain and release noise covariance.

\subsection{Convex Optimization Design for Stable Linear Estimators}
In general, directly designing $K$ and $\Xi$ is intractable, as it leads to a nonconvex optimization problem. Nevertheless, a tractable convex relaxation can be derived. To this end, the next theorem provides a sufficient lower bound on the noise covariance $\Xi$ for guaranteeing privacy.

\begin{secthm}
\label{thm:noise_lb}
Suppose that Assumptions~\ref{asm:rank_F}--\ref{asm:2} hold. Let
$ A_e := A - KCA$
and assume that $A_e$ is Schur stable. 
For any finite $\varepsilon \geq 0$ and $\delta \in (0,1)$ satisfying Assumption~\ref{asm:privacy}, the estimator characterized by~\eqref{eq:estimator} and~\eqref{eq:noise} with gain $K$ and noise covariance $\Xi$ preserves $(\varepsilon,\delta)$-PML privacy for the secret variable $\zeta_{\infty}$ if
\begin{align}
    \Xi \succeq &
\frac{\kappa_r (\varepsilon,\delta)}{1-\kappa_r (\varepsilon,\delta)}
\|\Sigma_{\zeta\zeta}^{-1}\|_2
\operatorname{tr}(
W_F \Sigma_\infty C^\top C\Sigma_\infty
) \nonumber \\
\label{eq:noise_lb}
&\times \sum_{i=0}^{\infty}\|H A_e^i K \|_2^2 I_s,
\end{align}
where $\Sigma_{\zeta \zeta}$ is defined in~\eqref{eq:steady_zeta} and $W_F$ is the solution of 
\begin{align*}
    W_F = A^\top W_F A + F^\top F.
\end{align*}
\end{secthm}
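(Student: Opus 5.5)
The plan is to reduce everything to Lemma~\ref{lem:gaussian_pml_characterization}. That lemma asks for two things: $\Gamma_{\zeta\zeta}\succ 0$, and $\log\det(\Sigma_{\zeta\zeta})-\log\det(\Gamma_{\zeta\zeta})\le 2\varepsilon-\fF^{-1}_{\chi^2_\ell}(1-\delta)$.

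Since $\ell\le r$ and the quantile is nondecreasing in the degrees of freedom, it suffices to show $\det(\Gamma_{\zeta\zeta})/\det(\Sigma_{\zeta\zeta})\ge \kappa_r^{\,r}$. This in turn follows from the stronger matrix inequality
\begin{align*}
\Gamma_{\zeta\zeta}\succeq \kappa_r\,\Sigma_{\zeta\zeta}.
\end{align*}
That inequality also gives $\Gamma_{\zeta\zeta}\succ0$, because $\Sigma_{\zeta\zeta}=F\Sigma_\infty F^\top\succ0$ by Assumptions~\ref{asm:rank_F}--\ref{asm:2}.

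\emph{Step 1 (Gaussian conditioning).} When $\Sigma_{\hat z\hat z}\succ0$ we have $\Gamma_{\zeta\zeta}=\Sigma_{\zeta\zeta}-\Sigma_{\zeta\hat z}\Sigma_{\hat z\hat z}^{-1}\Sigma_{\zeta\hat z}^\top$. Here $\Sigma_{\hat z\hat z}=H\Cov(\hat x_\infty)H^\top+\Xi\succeq\Xi$. By a Schur complement, $\Gamma\succeq\kappa\Sigma_{\zeta\zeta}$ is equivalent to $(1-\kappa)\Sigma_{\zeta\zeta}\succeq\Sigma_{\zeta\hat z}\Sigma_{\hat z\hat z}^{-1}\Sigma_{\zeta\hat z}^\top$, and hence to
\begin{align*}
\Sigma_{\hat z\hat z}\succeq \tfrac{1}{1-\kappa}\Sigma_{\zeta\hat z}^\top\Sigma_{\zeta\zeta}^{-1}\Sigma_{\zeta\hat z}.
\end{align*}
Since $\Sigma_{\hat z\hat z}\succeq \Xi$, it is enough to have
\begin{align*}
\Xi\succeq \tfrac{\kappa}{1-\kappa}\Sigma_{\zeta\hat z}^\top\Sigma_{\zeta\zeta}^{-1}\Sigma_{\zeta\hat z}+\text{slack}.
\end{align*}
More precisely, the exact sufficient condition is $H\Cov(\hat x)H^\top+\Xi\succeq \frac{1}{1-\kappa}M$ with $M=\Sigma_{\zeta\hat z}^\top\Sigma_{\zeta\zeta}^{-1}\Sigma_{\zeta\hat z}$. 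I will use the fact that $H\Cov(\hat x)H^\top\succeq M$ always holds, which is Cauchy--Schwarz for covariances, i.e.\ $\Gamma\succeq0$ in the noiseless case. Then $\Xi\succeq\frac{\kappa}{1-\kappa}M$ suffices. This is exactly the structure of Theorem~\ref{thm:noise_KF}, so the task is now to bound $M\preceq \|\Sigma_{\zeta\zeta}^{-1}\|_2\,\|\Sigma_{\zeta\hat z}\|_F^2 I_s$. If $\Xi$ is singular, the bound is only needed when the right-hand side of \eqref{eq:noise_lb} is zero, which forces $\Sigma_{\zeta\hat z}=0$ and hence $\Gamma=\Sigma_{\zeta\zeta}$. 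I will handle that degenerate case separately, or use pseudo-inverses throughout.

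\emph{Step 2 (cross-covariance).} Unroll the estimator as $\hat x_k=A_e\hat x_{k-1}+K(CAx_{k-1}+CBw_{k-1}+Dv_k)$. In steady state,
\begin{align*}
\hat x_\infty=\sum_{i\ge0}A_e^iK\,y_{\infty-i}+(\text{terms that vanish}),
\end{align*}
since $K y_{k}$ with $\hat x$ feedback yields $\hat x_k=\sum_i A_e^i K(y_{k-i}-CA\cdot 0)$. Because $y_{k-i}=Cx_{k-i}+Dv_{k-i}$ and $v$ is independent of $x_k$, we get
\begin{align*}
\Sigma_{\zeta\hat z}=\sum_{i\ge0}F\,\Cov(x_\infty,x_{\infty-i})C^\top K^\top A_e^{i\top}H^\top=\sum_i FA^i\Sigma_\infty C^\top K^\top A_e^{i\top}H^\top.
\end{align*}
I need to double-check that the innovation form indeed reduces to a filter driven by $y$ with coefficients $A_e^iK$. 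Then apply Cauchy--Schwarz over the sum:
\begin{align*}
\|\Sigma_{\zeta\hat z}\|_F^2\le \Bigl(\sum_i\|FA^i\Sigma_\infty C^\top\|_F^2\Bigr)\Bigl(\sum_i\|HA_e^iK\|_2^2\Bigr).
\end{align*}
The first factor equals $\tr\bigl(C\Sigma_\infty(\sum_iA^{i\top}F^\top FA^i)\Sigma_\infty C^\top\bigr)=\tr(W_F\Sigma_\infty C^\top C\Sigma_\infty)$. This matches \eqref{eq:noise_lb}, using $\|XY\|_F\le\|X\|_F\|Y\|_2$.

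The main obstacle is Step 2's bookkeeping: deriving the correct steady-state moving-average representation of $\hat x_\infty$ in terms of past outputs, and checking which noise terms contribute to the cross-covariance with $x_\infty$. The Cauchy--Schwarz splitting has to land exactly on $\sum_i\|HA_e^iK\|_2^2$ and not on a mixed term. Stability of $A_e$ and $A$ ensures convergence of both series.
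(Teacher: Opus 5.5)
Your proposal is correct and follows essentially the paper's route. Your Step~2 is exactly the paper's step~(i): the moving-average representation $\hat x_k=A_e^k\hat x_0+\sum_{i=0}^{k-1}A_e^iKy_{k-i}$, the triangle inequality, $\|XY\|_F\le\|X\|_F\|Y\|_2$ and Cauchy--Schwarz. Your Step~1, the Schur-complement reduction to $\Xi\succeq\frac{\kappa}{1-\kappa}H\Sigma_{\hat x\zeta}\Sigma_{\zeta\zeta}^{-1}\Sigma_{\zeta\hat x}H^\top$ followed by $\lambda_{\max}(MNM^\top)\le\|N\|_2\|M\|_F^2$, is its step~(ii).

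The differences are minor. You rederive the content of Lemma~\ref{lem:noise_add} directly from Lemma~\ref{lem:gaussian_pml_characterization} via $\Gamma_{\zeta\zeta}\succeq\kappa_r\Sigma_{\zeta\zeta}$, working with $\kappa_r$ throughout instead of $\kappa_\ell$ plus monotonicity. You also treat the singular-$\Xi$ case explicitly, which the paper leaves implicit in the cited result.
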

\begin{proof}
See Appendix~\ref{app:3}.
\end{proof}

    The sufficient covariance bound in~\eqref{eq:noise_lb} admits a natural interpretation. First, $W_F$ is the observability Gramian associated with the pair $(A,F)$. Hence,
$\operatorname{tr}(W_F \Sigma_\infty C^\top C \Sigma_\infty)$
quantifies how strongly the private variable $\zeta$ propagates through the system dynamics and the estimator. 
Second, the term $\sum_{i=0}^{\infty}\|H A_e^i K\|_2^2$ also determines the amount of release noise in~\eqref{eq:noise_lb}. In general, reducing this term conflicts with estimation performance objectives, since accurate estimation often requires smaller $\|A_e\|_2$ for faster decay rate. This may require a larger gain $K$, which can in turn increase $\|H A_e^i K\|_2^2$. Consequently, a larger noise covariance $\Xi$ is required according to~\eqref{eq:noise_lb}. This tradeoff motivates the joint design of $K$ and $\Xi$ in the sequel.

Motivated by Theorem~\ref{thm:noise_lb}, we now develop a tractable synthesis method for jointly designing the estimator gain and the release noise covariance. The following theorem formulates this co-design problem as a semi-definite programming (SDP) which is convex. Any feasible solution yields a stable linear time-invariant estimator that satisfies the prescribed $(\varepsilon,\delta)$-PML privacy constraint and admits an explicit upper bound on the steady-state estimation error.

\begin{secthm}
\label{thm:convex_opt}
Suppose that Assumptions~\ref{asm:rank_F}--\ref{asm:2} hold. For any finite $\varepsilon \geq 0$ and $\delta \in (0,1)$ satisfying Assumption~\ref{asm:privacy}, consider the following SDP:
\begin{align}
&\min_{\substack{
X\in \bS^{n}_{++},\; U \in \bR^{m \times n}\\
Z\in \bS^{p+q}_{++},\; W \in \bS^{m}_{++}
}}
\quad
\operatorname{tr}(Z)+\beta(\varepsilon,\delta)\operatorname{tr}(W)
\label{eq:SDPobj}\\
\text{s.t.}\quad &
% \begin{bmatrix}
% X & A^\top X+ A^\top C^\top U & H^\top\\
% X A+U^\top C A & X & 0\\
% H & 0 & I
% \end{bmatrix} \succ 0, \label{eq:LMI1}\\
% &\begin{bmatrix}
% Z &  \begin{bmatrix}
%     B^{\top}X + B^\top C^\top U\\
%     D^\top U
% \end{bmatrix}\\
% \begin{bmatrix}
%     XB + U^\top CB &
%      U^\top D
% \end{bmatrix} & X
% \end{bmatrix}\succeq 0, \label{eq:LMI2}\\
% &\begin{bmatrix}
% W & U\\
% U^\top & X
% \end{bmatrix}\succeq 0, \label{eq:LMI3}
\begin{bmatrix}
X & \star & \star\\
X A+U^\top C A & X & \star\\
H & 0 & I
\end{bmatrix} \succ 0, \label{eq:LMI1}\\
&\begin{bmatrix}
Z & \star\\
\begin{bmatrix}
    XB + U^\top CB &
     U^\top D
\end{bmatrix} & X
\end{bmatrix}\succeq 0, \label{eq:LMI2}\\
&\begin{bmatrix}
W & \star\\
U^\top & X
\end{bmatrix}\succeq 0, \label{eq:LMI3}
\end{align}
where $\star$ denotes entries implied by symmetry,
\begin{align*}
\beta(\varepsilon,\delta):=
\frac{\kappa_r (\varepsilon,\delta)\|\Sigma_{\zeta\zeta}^{-1}\|_2\, s}
{1-\kappa_r (\varepsilon,\delta)}
\operatorname{tr}\!\bigl(
W_F \Sigma_\infty C^\top C\Sigma_\infty
\bigr),
\end{align*}
and $s$ is the dimension of $z_k$. Assume that the SDP is feasible. Using a feasible solution $(X,U,Z,W)$
of \eqref{eq:SDPobj}--\eqref{eq:LMI3}, define
\begin{align*}
K=-X^{-1}U^\top, \Xi=\frac{\beta(\varepsilon,\delta)}{s}\operatorname{tr}(W)I_s.
\end{align*}
Then, the estimator characterized by~\eqref{eq:estimator} and~\eqref{eq:noise} with gain $K$ and noise covariance $\Xi$ preserves $(\varepsilon,\delta)$-PML privacy for the secret variable $\zeta_\infty$, and
\begin{align*}
\lim_{k\to\infty}\operatorname{tr}\bigl(\Cov(z_k-\hat z_k)\bigr)
\le
\operatorname{tr}(Z)+\beta(\varepsilon,\delta)\operatorname{tr}(W)
\end{align*}
holds. 
\end{secthm}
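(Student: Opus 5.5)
The plan is to reduce everything to Theorem~\ref{thm:noise_lb} and a Lyapunov-type bound on the error covariance. The change of variables $U=-K^\top X$, equivalently $K=-X^{-1}U^\top$, linearizes all constraints. The key identities are
\[
XA+U^\top CA = X(A-KCA)=XA_e,\qquad XB+U^\top CB = X(I-KC)B,\qquad U^\top D=-XKD .
\]
I will use them to translate each LMI back into a matrix inequality in the original variables $(K,\Xi)$.

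\emph{Step 1 (stability and a Gramian bound).} Take the Schur complement of \eqref{eq:LMI1} with respect to the lower-right block $\mathrm{diag}(X,I)$. Using $XA+U^\top CA = XA_e$, this gives
\[
X - A_e^\top X A_e - H^\top H \succ 0 .
\]
Since $X\succ0$, this strict Lyapunov inequality implies that $A_e$ is Schur stable, so $K\in\mathcal K$. Iterating the inequality and using $A_e^i\to0$ gives
\[
X \succeq W_H:=\sum_{i\ge0}(A_e^\top)^iH^\top HA_e^i .
\]

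\emph{Step 2 (estimation error).} Subtracting \eqref{eq:estimator} from \eqref{eq:LTI} gives
\[
e_{k+1}=x_{k+1}-\hat x_{k+1}=A_e e_k+(I-KC)Bw_k-KDv_{k+1}.
\]
The noise terms are independent of $e_k$, so with $B_e:=[(I-KC)B,\ -KD]$ the steady-state covariance $P$ solves $P=A_ePA_e^\top+B_eB_e^\top$. Because $\xi_k$ is independent of the estimator, $\Cov(z_\infty-\hat z_\infty)=HPH^\top+\Xi$. Writing $P=\sum_i A_e^iB_eB_e^\top(A_e^\top)^i$, we get
\[
\tr(HPH^\top)=\tr(B_e^\top W_HB_e)\le \tr(B_e^\top XB_e).
\]
The off-diagonal block of \eqref{eq:LMI2} is $X[(I-KC)B,\ -KD]=XB_e$; the sign of the second column is immaterial after a congruence with $\mathrm{diag}(I_p,-I_q)$. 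Its Schur complement then gives $Z\succeq B_e^\top X X^{-1}XB_e=B_e^\top XB_e$, so $\tr(HPH^\top)\le\tr(Z)$. Also $\tr(\Xi)=\beta(\varepsilon,\delta)\tr(W)$ by construction. Since $\Cov(z_k-\hat z_k)\to HPH^\top+\Xi$ by stability, the stated error bound follows.

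\emph{Step 3 (privacy).} Let $c:=\frac{\kappa_r(\varepsilon,\delta)}{1-\kappa_r(\varepsilon,\delta)}\|\Sigma_{\zeta\zeta}^{-1}\|_2\tr(W_F\Sigma_\infty C^\top C\Sigma_\infty)$, so that $\beta(\varepsilon,\delta)=c\,s$. Bounding the spectral norm by the Frobenius norm and using Step 1,
\[
\sum_i\|HA_e^iK\|_2^2\le\sum_i\|HA_e^iK\|_F^2=\tr(K^\top W_HK)\le\tr(K^\top XK).
\]
The Schur complement of \eqref{eq:LMI3} gives $W\succeq UX^{-1}U^\top=K^\top XK$, hence $\sum_i\|HA_e^iK\|_2^2\le\tr(W)$. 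Therefore
\[
\Xi=\tfrac{\beta(\varepsilon,\delta)}{s}\tr(W)I_s=c\,\tr(W)I_s\succeq c\sum_i\|HA_e^iK\|_2^2\,I_s ,
\]
which is exactly condition \eqref{eq:noise_lb}. Theorem~\ref{thm:noise_lb} then yields $(\varepsilon,\delta)$-PML privacy.

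The main obstacle is bookkeeping rather than any conceptual gap. One must get the Schur complements in the correct orientation, with the congruence that handles the sign of the $-KD$ column, and pass from the strict inequality in \eqref{eq:LMI1} to Schur stability and $X\succeq W_H$. It is precisely this last step that makes all three LMIs share the single certificate $X$.
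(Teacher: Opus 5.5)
Your proposal is correct and follows essentially the same route as the paper. Both arguments substitute $U=-K^\top X$ and take Schur complements of \eqref{eq:LMI1}--\eqref{eq:LMI3}, which gives Schur stability of $A_e$, $X\succeq\sum_i (A_e^\top)^iH^\top HA_e^i$, $Z\succeq B_e^\top XB_e$ and $W\succeq K^\top XK$; they then conclude privacy via Theorem~\ref{thm:noise_lb} and the error bound via the Lyapunov series. The differences are only cosmetic: you spell out the iteration giving $X\succeq W_H$, which the paper simply asserts, and your congruence with $\mathrm{diag}(I_p,-I_q)$ is unnecessary because the off-diagonal block of \eqref{eq:LMI2} is exactly $XB_e$.
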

\begin{proof}
    See Appendix~\ref{app:4}.
\end{proof}

Theorem~\ref{thm:convex_opt} provides a tractable sufficient design for privacy-preserving estimation. The LMI~\eqref{eq:LMI1} enforces stability of the estimation error dynamics, \eqref{eq:LMI2} bounds the steady-state estimation error, and \eqref{eq:LMI3} bounds the privacy-relevant sensitivity term appearing in~\eqref{eq:noise_lb}. The resulting release-noise covariance is conservative because the designed noise covariance $\Xi$ is proportional to $I_s$. This choice of the noise covariance $\Xi$ ignores the directional structure of the information leakage. However, it enables the joint design of $K$ and $\Xi$ to be formulated as an SDP. Therefore, any feasible solution of the SDP yields a stable linear time-invariant estimator and a release noise covariance that satisfy the prescribed $(\varepsilon,\delta)$-PML privacy constraint.

The objective $\operatorname{tr}(Z)+\beta(\varepsilon,\delta)\operatorname{tr}(W)$ has a natural privacy--utility interpretation. The term $\operatorname{tr}(Z)$ bounds the estimation error, while $\operatorname{tr}(W)$ determines the amount of release noise required by the sufficient privacy condition. Stronger privacy requirements increase $\beta(\varepsilon,\delta)$, thereby placing more weight on $\operatorname{tr}(W)$.

The following remark shows that the SDP is not vacuous: under the standing assumptions, a feasible solution can always be constructed.

\begin{secrem}
In fact, the optimization problem in Theorem~\ref{thm:convex_opt} is always feasible under Assumptions~\ref{asm:rank_F}--\ref{asm:2}.  To construct a trivial feasible point, set $U = 0$.
Since $A$ is Schur stable, for any $Q\succ0$ the discrete-time Lyapunov equation
\begin{align*}
X=A^\top X A+H^\top H+Q
\end{align*}
admits a unique solution $X\succ0$. Hence,
\begin{align*}
X \succ A^\top X A+H^\top H,
\end{align*}
which is equivalent to~\eqref{eq:LMI1} with $U=0$. Next, \eqref{eq:LMI2} reduces to
\begin{align*}
\begin{bmatrix}
Z & \begin{bmatrix} B^\top X \\ 0 \end{bmatrix}\\
\begin{bmatrix} XB & 0 \end{bmatrix} & X
\end{bmatrix}\succeq 0,
\end{align*}
and this can always be satisfied, for example, by choosing
\begin{align*}
Z=\operatorname{diag}(B^\top X B+\epsilon I_p,\epsilon I_q)
\end{align*}
with sufficiently large $\epsilon>0$. Finally, \eqref{eq:LMI3} becomes
\begin{align*}
\begin{bmatrix}
W & 0\\
0 & X
\end{bmatrix}\succeq 0,
\end{align*}
which clearly holds for any $W\succ0$. Therefore, the SDP always admits at least one feasible solution.
\red
\end{secrem}

%%%%%%%%%%%%%%%%%%%%%%%%%%%%%%%%%%%%%%%%%%%%%%%%%%%%%%%%%%%%%%%%%%%%%%%
%%%%%%%%%%%%%%%%%%%%%%%%%%%%%%%%%%%%%%%%%%%%%%%%%%%%%%%%%%%%%%%%%%%%%%%
%%%%%%%%%%%%%%%%%%%%%%%%%%%%%%%%%%%%%%%%%%%%%%%%%%%%%%%%%%%%%%%%%%%%%%%

%%%%%%%%%%%%%%%%%%%%%%%%%%%%%%%%%%%%%%%%%%%%%%%%%%%%%%%%%%%%%%%%%%%%%%%
%%%%%%%%%%%%%%%%%%%%%%%%%%%%%%%%%%%%%%%%%%%%%%%%%%%%%%%%%%%%%%%%%%%%%%%
%%%%%%%%%%%%%%%%%%%%%%%%%%%%%%%%%%%%%%%%%%%%%%%%%%%%%%%%%%%%%%%%%%%%%%%

\section{Simulations}
\label{sec:sim} 
In this section, we present two simulations. In the first subsection, we verify Theorem~\ref{thm:kalman_lp} using a simple numerical example. In the second subsection, we revisit the motivation example to demonstrate the effectiveness of the design methods developed in Section~\ref{sec:filter}.
\subsection{Numerical Verification of Theorem~\ref{thm:kalman_lp}}
\begin{figure}[t]
    \centering    \includegraphics[width=1\linewidth]{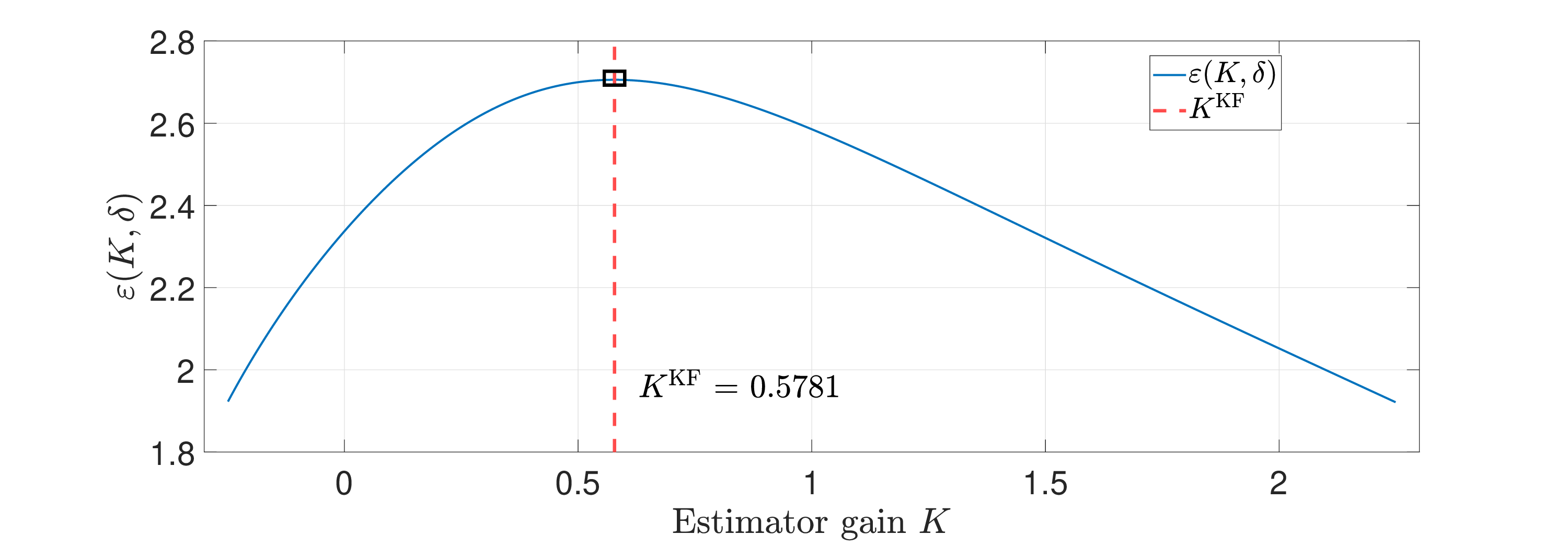}
    \caption{Privacy level $\varepsilon(K,\delta)$ versus estimator gain
    $K$ for the scalar example, with fixed $\delta=0.05$. The dashed
    vertical line indicates the steady-state Kalman gain
    $K^{\mathrm{KF}}$. The maximum is attained at the Kalman gain, in
    agreement with Theorem~\ref{thm:kalman_lp}.}
    \label{fig:eps_vs_K}
\end{figure}
To directly illustrate Theorem~\ref{thm:kalman_lp}, we consider a simple
one-dimensional example. The system is given by setting $
A=0.8, B=1, C=1, D=1$ in~\eqref{eq:LTI}. 
The released variable and the secret variable are the state itself, namely, $H =1$ and $F = 1$.
Hence, the structural condition in Theorem~\ref{thm:kalman_lp} is
satisfied with $L=1$.

Under the condition that no release noise is added, i.e., $\Xi = 0$, we fix $\delta=0.05$ and vary the estimator gain $K$ over the set of
stable gains satisfying $
|A-KCA|<1 $.
For each admissible $K$, the privacy parameter
$\varepsilon(K,\delta)$ is computed from the characterization in~\eqref{eq:varepsilonK}. The privacy metric can be evaluated directly from the steady-state
joint covariance of $(x_k,\hat x_k)$. Figure~\ref{fig:eps_vs_K} plots
$\varepsilon(K,\delta)$ as a function of $K$, together with a vertical
line marking the steady-state Kalman gain $K^{\mathrm{KF}}$.

As predicted by Theorem~\ref{thm:kalman_lp}, the maximum of
$\varepsilon(K,\delta)$ is attained at $K=K^{\mathrm{KF}}$. Therefore,
among all stable scalar estimators of the form~\eqref{eq:estimator}, the
Kalman filter is the least private one in the sense of PML.

\subsection{Revisiting the Motivation Example}

\begin{figure*}[t]
    \centering   \includegraphics[height=0.4\textheight,keepaspectratio]{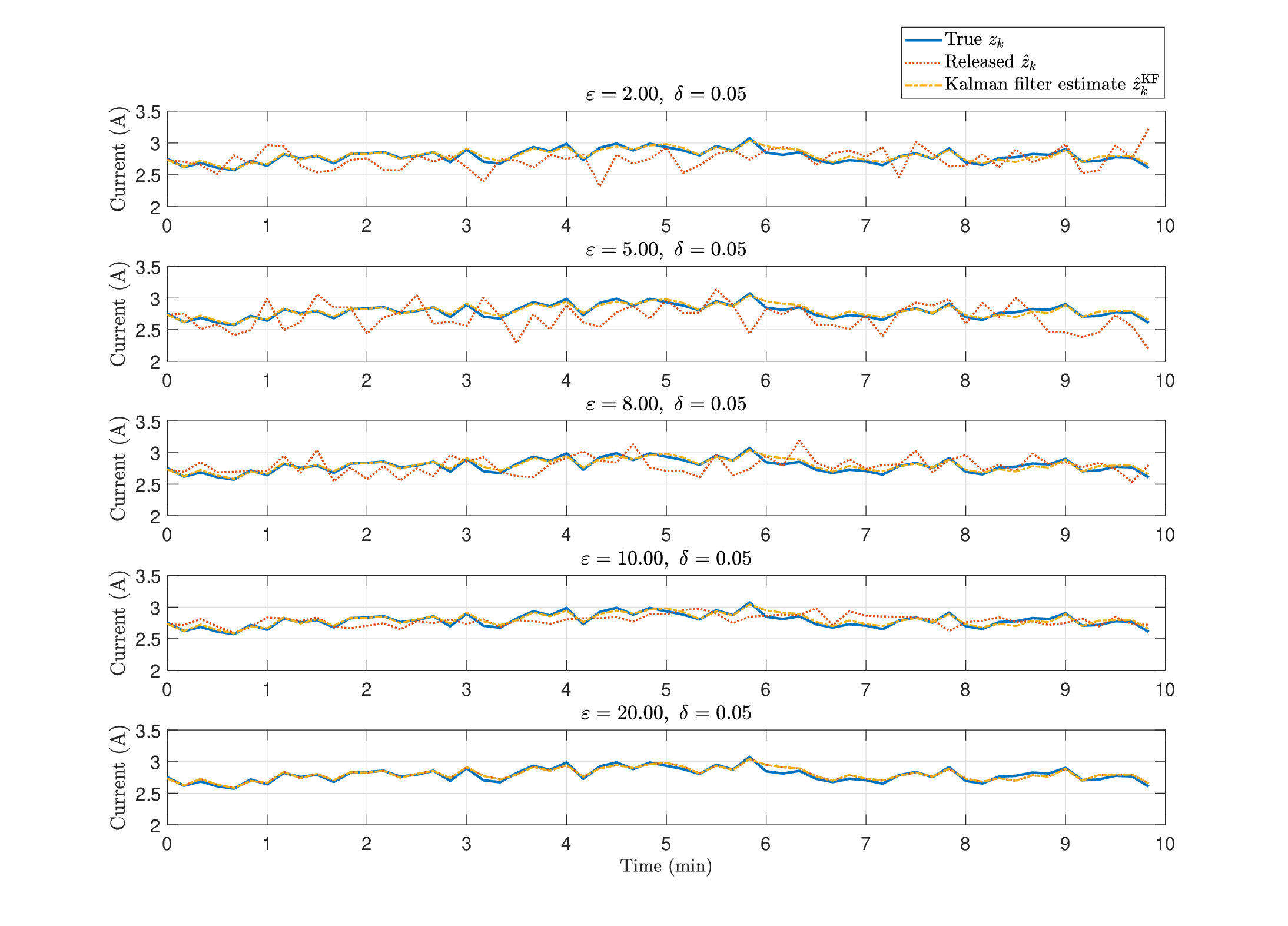}
    \caption{Trajectory comparison of the released estimates under different privacy parameters.}
    \label{fig:traj_compare}
\end{figure*}
\begin{figure*}
    \centering    \includegraphics[width=0.8\linewidth]{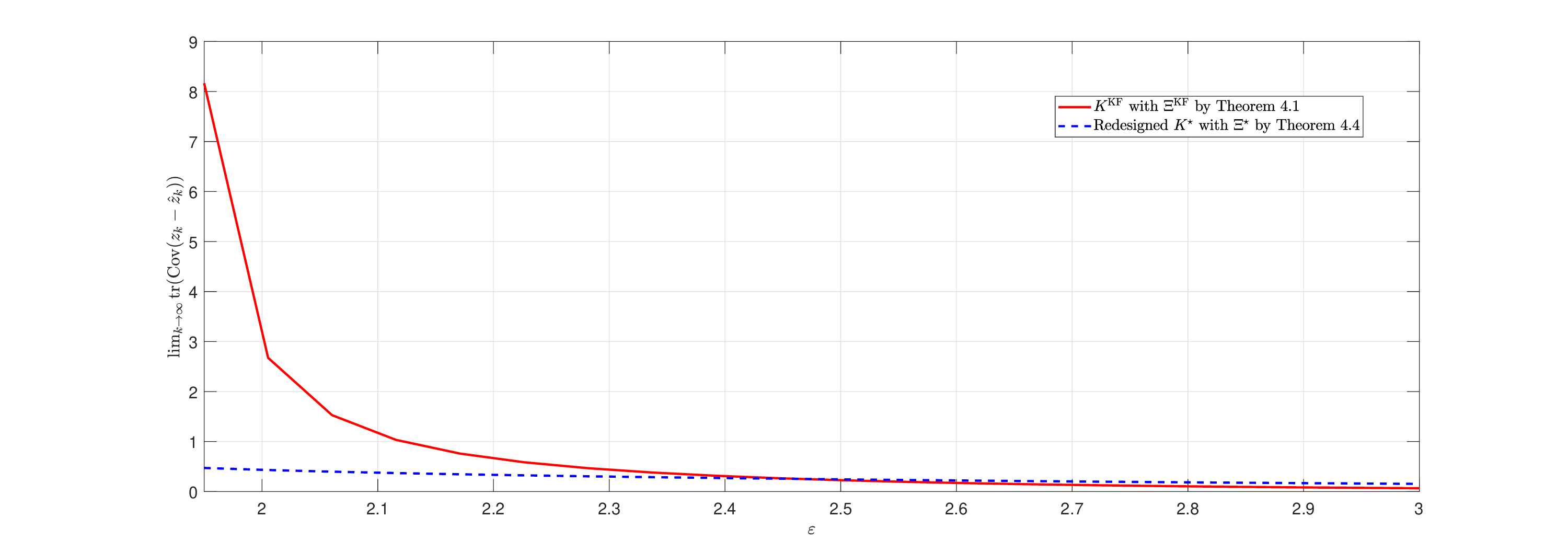}
    \caption{Theoretical steady-state estimation error comparison with fixed $\delta=0.05$.}
    \label{fig:KFvsK}
\end{figure*}
In this subsection, we illustrate the proposed privacy-preserving filter design using the source--load power system introduced in the motivating example. The secret variable is the user consumption state $d$, whereas the released estimate corresponds to the line current $i_s$. Thus, $F =
\begin{bmatrix}
0 & 0 & 1 & 0
\end{bmatrix},
\ 
H =
\begin{bmatrix}
\frac{1}{R} & -\frac{1}{R} & 0 & 0
\end{bmatrix}. $
For this source--load example, the structural condition $F=LH$ used in Theorem~\ref{thm:kalman_lp} is not satisfied. Therefore, this simulation is intended to illustrate the general SDP-based co-design method in Theorem~\ref{thm:convex_opt}, which does not require $F=LH$.
The physical parameters are chosen as $R = 0.5, R_L = 80, C_1 = 5, C_2 = 2$,
and the PI voltage-control gains are taken as $k_p = 1.0, k_i = 0.8$.
For the stochastic load model, we use $
\lambda = 0.0001, b_1 = 0.10, b_2 = 0.10, b_3 = 0.01$,
while the measurement noise standard deviation is set to
$ \sigma_v = 0.05$. The choice of $\lambda$ approximates a constant-load consumption.

% The continuous-time source--load model is first transformed into equilibrium coordinates and then discretized. Specifically, we shift the system around the deterministic equilibrium $v_s^\ast = v_{\mathrm{ref}},
% v_\ell^\ast = \frac{R_L}{R+R_L}v_{\mathrm{ref}},
% d^\ast = 0,
% \eta^\ast = -\frac{v_{\mathrm{ref}}}{k_i(R+R_L)}$.
% and define the state by $x(t)=
% \begin{bmatrix}
% v_s(t)-v_s^\ast &
% v_\ell(t)-v_\ell^\ast &
% d(t)-d^\ast &
% \eta(t)-\eta^\ast
% \end{bmatrix}^{\top}$.
% This yields a linear stochastic system of the form
% \begin{align*}
% \mathrm{d}x(t)=A_cx(t)\,\mathrm{d}t+G_c\mathrm{d}W(t).
% \end{align*}
The model is discretized with sampling period $T_s=10$ using exact discretization, leading to
\begin{align*}
x_{k+1}=Ax_k + Bw_k,
\end{align*}
with
$
A=\exp(A_cT_s),\qquad
BB^\top=\int_0^{T_s}\exp(A_c\tau)\,G_cG_c^\top \exp(A_c^\top\tau)\,\mathrm{d}\tau.
$

 In the simulation, the privacy confidence level is fixed at 
$\delta = 0.05$
and several privacy levels are considered by varying $\varepsilon \in \{2,\,5,\,8,\,10,\,20\}$.
For each value of $\varepsilon$, the corresponding optimal gain $K^\star$ and privacy noise covariance $\Xi^\star$ are computed from the SDP. To ensure a fair comparison across privacy levels, all estimators are evaluated on the same realization of the system trajectory and measurement sequence.
As a benchmark, we also compute the steady-state Kalman filter, which corresponds to the minimum mean-square error estimator without privacy protection.

Figure~\ref{fig:traj_compare} shows the estimation trajectories for different values of $\varepsilon$, together with the true line current and the steady-state Kalman filter estimate with $\Xi^{\mathrm{KF}} = 0$. Each subfigure corresponds to one privacy level $\varepsilon$. As expected, when the privacy requirement becomes more stringent, that is, when $\varepsilon$ decreases, the released estimate has larger estimation error. In contrast, for larger $\varepsilon$, the released estimate becomes closer to the Kalman filter benchmark. This explicitly shows the tradeoff between the estimation performance and the privacy.

Figure~\ref{fig:KFvsK} compares the estimation performance of the Kalman filter with noise design in Theorem~\ref{thm:noise_KF} with that of the SDP-based design in Theorem~\ref{thm:convex_opt} for different values of $\varepsilon$. For small values of $\varepsilon$, the noise design for the Kalman filter requires a relatively large release noise covariance $\Xi^{\mathrm{KF}}$, resulting in a noticeably larger estimation error. As $\varepsilon$ increases, the two designs exhibit similar performance. The performance improvement of the SDP-based design can be understood from the theoretical bounds developed in Section~\ref{sec:filter}. In the Kalman filter noise design, the gain is fixed as $K=K^{\mathrm{KF}}$, and privacy is enforced only by increasing the release noise covariance $\Xi^{\mathrm{KF}}$. In contrast, the proposed SDP jointly designs the estimator gain $K$ and the noise covariance $\Xi$. This additional degree of freedom allows the estimator to reduce the privacy-relevant sensitivity term in~\eqref{eq:noise_lb}, thereby requiring a smaller release noise covariance to achieve the same $(\varepsilon,\delta)$-PML privacy level. Therefore, the advantage of the proposed SDP-based design is most pronounced when stronger privacy protection is required.

%%%%%%%%%%%%%%%%%%%%%%%%%%%%%%%%%%%%%%%%%%%%%%%%%%%%%%%%%%%%%%%%%%%%%%%
%%%%%%%%%%%%%%%%%%%%%%%%%%%%%%%%%%%%%%%%%%%%%%%%%%%%%%%%%%%%%%%%%%%%%%%
%%%%%%%%%%%%%%%%%%%%%%%%%%%%%%%%%%%%%%%%%%%%%%%%%%%%%%%%%%%%%%%%%%%%%%%

\section{Conclusion}
\label{sec:con}

This paper has studied privacy-preserving estimation for linear stochastic systems under steady-state PML privacy. A linear function of the state has been treated as the secret variable, while another linear function has been estimated and released through a stable linear estimator with additive Gaussian perturbation. 

We have first analyzed the Kalman filter from the PML-privacy perspective. Although the Kalman filter is optimal in the minimum mean-square error sense, we have shown that, under a structural condition and in the absence of release noise, it is the least private estimator among the considered class of stable linear time-invariant estimators. Under the same structural condition, we have further established that $(\varepsilon,\delta)$-PML privacy imposes a nontrivial lower bound on the steady-state estimation error.

We have then derived sufficient release-noise covariance conditions for guaranteeing $(\varepsilon,\delta)$-PML privacy, both for the Kalman filter and for general stable linear estimators. Finally, we have proposed a convex SDP for jointly designing the estimator gain and release noise covariance. Any feasible solution guarantees the prescribed PML privacy level and provides an explicit upper bound on the steady-state estimation error. Specifically, the SDP-based design improves over the Kalman-filter baseline by jointly tuning the estimator and the release noise, which reduces unnecessary utility loss while satisfying the required privacy constraint. Simulations on a source--load power system have illustrated the resulting privacy--utility tradeoff and the advantage of the SDP-based design.

%%%%%%%%%%%%%%%%%%%%%%%%%%%%%%%%%%%%%%%%%%%%%%%%%%%%%%%%%%%%%%%%%%%%%%%
%%%%%%%%%%%%%%%%%%%%%%%%%%%%%%%%%%%%%%%%%%%%%%%%%%%%%%%%%%%%%%%%%%%%%%%
%%%%%%%%%%%%%%%%%%%%%%%%%%%%%%%%%%%%%%%%%%%%%%%%%%%%%%%%%%%%%%%%%%%%%%%

%%%%%%%%%%%%%%%%%%%%%%%%%%%%%%%%%%%%%%%%%%%%%%%%%%%%%%%%%%%%%%%%%%%%%%%
%%%%%%%%%%%%%%%%%%%%%%%%%%%%%%%%%%%%%%%%%%%%%%%%%%%%%%%%%%%%%%%%%%%%%%%
%%%%%%%%%%%%%%%%%%%%%%%%%%%%%%%%%%%%%%%%%%%%%%%%%%%%%%%%%%%%%%%%%%%%%%%

\appendices
\section{Proof of Lemma~\ref{lem:gaussian_pml_characterization}}
\label{app:lem1}
We first prove sufficiency. Suppose that $\Gamma_{\zeta\zeta}\succ0$ and
that \eqref{eq:lem_privacy} holds. Then the prior and posterior
distributions of $\zeta_\infty$ are both nondegenerate Gaussian
distributions. Hence, Theorem~3.3 in~\cite{liu2025privacy} applies
directly and yields
\begin{align*}
\mathbb P\bigl(\ell(\zeta_\infty\to \hat z_\infty)\le \varepsilon\bigr)
\ge 1-\delta .
\end{align*}
Therefore, the estimator preserves $(\varepsilon,\delta)$-PML privacy.

We next prove necessity. Suppose that the estimator preserves
$(\varepsilon,\delta)$-PML privacy for some finite $\varepsilon \ge 0$ and
$\delta \in (0,1)$. {We first show that this necessarily implies
$\Gamma_{\zeta\zeta}\succ0$ by contradiction.} Assume that $\Gamma_{\zeta\zeta}$ is singular. Then, there exists $0\neq a\in\operatorname{Ker}(\Gamma_{\zeta\zeta})$. Let
$\tilde \zeta_{\hat z}$ be a random vector distributed according to the
conditional distribution of $\zeta_\infty$ given $\hat z_\infty=\hat z$,
i.e.,
\begin{align*}
\tilde \zeta_{\hat z}
\sim
\mathcal L(\zeta_\infty\mid \hat z_\infty=\hat z).
\end{align*}
Let $\mu(\hat z):=\mathbb E[\tilde \zeta_{\hat z}]$.
Since the covariance of $\tilde \zeta_{\hat z}$ is
$\Gamma_{\zeta\zeta}$, we have
\begin{align*}
\Cov(a^\top \tilde \zeta_{\hat z})
=
a^\top \Gamma_{\zeta\zeta}a
=
0.
\end{align*}
Since a scalar random variable with zero variance is almost surely
constant, it follows that
\begin{align*}
a^\top \tilde \zeta_{\hat z}
=
\mathbb E[a^\top \tilde \zeta_{\hat z}]
=
a^\top \mu(\hat z),
\qquad \text{a.s.}
\end{align*}
Therefore, the conditional distribution assigns probability one to the
hyperplane $S(\hat z):=
\{\zeta\in\mathbb R^r:
a^\top\zeta=a^\top \mu(\hat z)\}$,
that is,
\begin{align*}
\mathbb P_{\zeta_\infty\mid \hat z_\infty=\hat z}(S(\hat z))=1.
\end{align*}

Recall that, under
Assumptions~\ref{asm:rank_F} and~\ref{asm:2}, the prior covariance
satisfies $\Sigma_{\zeta\zeta}\succ 0$. By $\Sigma_{\zeta\zeta}\succ 0$ and $a\neq 0$, we have
$a^\top\Sigma_{\zeta\zeta}a>0$.Therefore, the scalar random
variable $a^\top\zeta_\infty$ is Gaussian with variance
$a^\top\Sigma_{\zeta\zeta}a>0$, and hence is nondegenerate. Consequently, we have
\begin{align*}
\mathbb P_{\zeta_\infty}(S(\hat z))
=
\mathbb P(a^\top\zeta_\infty=a^\top \mu(\hat z))
=
0.
\end{align*}

Therefore, for almost every realization
$\hat z$, the conditional distribution
$\mathbb P_{\zeta_\infty\mid \hat z_\infty=\hat z}$ is supported on
$S(\hat z)$, whereas the prior distribution $\mathbb P_{\zeta_\infty}$ assigns zero probability to
$S(\hat z)$. Hence,
$\mathbb P_{\zeta_\infty\mid \hat z_\infty=\hat z}$ is not absolutely
continuous with respect to $\mathbb P_{\zeta_\infty}$, and the density
ratio in~\eqref{eq:PML} cannot be essentially bounded by any finite
constant. Consequently, it holds that
\begin{align*}
\ell(\zeta_\infty\to \hat z) > \varepsilon
\end{align*}
for any finite $\varepsilon \geq 0$ and almost every realization $\hat z$. It follows that
\begin{align*}
\mathbb P\bigl(
\ell(\zeta_\infty\to \hat z_\infty)\le \varepsilon
\bigr)
=
0
<
1-\delta,
\end{align*}
which contradicts $(\varepsilon,\delta)$-PML privacy since
$\delta\in(0,1)$. Hence, we have  $\Gamma_{\zeta\zeta}\succ0$. 

 Having established that $\Gamma_{\zeta\zeta}\succ0$, we now complete the
necessity argument by deriving \eqref{eq:lem_privacy}. By the fact that
$\Sigma_{\zeta\zeta}\succ0$ and $\Gamma_{\zeta\zeta}\succ0$,  the prior and
posterior distributions of $\zeta_\infty$ are nondegenerate Gaussian
distributions. Thus, Theorem~3.3 in~\cite{liu2025privacy} applies, and the assumed
$(\varepsilon,\delta)$-PML privacy of the estimator yields
\eqref{eq:lem_privacy}. This proves the necessity of \eqref{eq:lem_privacy} and completes the
proof of the lemma. \QED

\section{Proof of Proposition~\ref{prop:kf_eps_delta_generalF}}
\label{app:prop3_4}
First, we show that $\Gamma_{\zeta\zeta}^{\mathrm{KF}}$ is positive definite. By the definition of  $\Gamma_{\zeta\zeta}^{\mathrm{KF}}$, we have \begin{align*}
    \Gamma_{\zeta\zeta}^{\mathrm{KF}} 
= & FP^{\mathrm{KF}} F^\top + F\Bigl((\Sigma_\infty-P^{\mathrm{KF}}) - (\Sigma_\infty-P^{\mathrm{KF}})H^\top \\
&\times \big(H(\Sigma_\infty-P^{\mathrm{KF}})H^\top \big)^\dagger
H(\Sigma_\infty-P^{\mathrm{KF}}) \Bigr)F^\top.
\end{align*} 
Since the second term is a generalized Schur complement, it is positive semidefinite. Hence, it holds that $\Gamma_{\zeta\zeta}^{\mathrm{KF}}
\succeq
FP^{\mathrm{KF}}F^\top$.
This, together with Assumption~\ref{asm:rank_F} and Lemma~\ref{lem:prior_pd}, yields that 
\begin{align}
\label{eq:positive_Gamma}
    \Gamma_{\zeta\zeta}^{\mathrm{KF}}   \succeq FP^{\mathrm{KF}} F^\top \succ 0.
\end{align}

    By Lemma~\ref{lem:gaussian_pml_characterization}, it suffices to identify
$\Gamma_{\zeta\zeta}^{\mathrm{KF}}
= \Cov(\zeta_{\infty} \mid \hat{z}_{\infty}^{\mathrm{KF}})$ and $\rank\bigl(\Cov(\zeta_{\infty},\hat z_\infty^{\mathrm{KF}})\bigr)$.
The Gaussian conditioning formula yields
\begin{align}
\label{eq:prop_1}
\Cov(\zeta_\infty\mid \hat z_\infty^{\mathrm{KF}})
&=
\Cov(\zeta_\infty)-\Cov(\zeta_\infty,\hat z_\infty^{\mathrm{KF}}) \nonumber\\
& \times \Cov(\hat z_\infty^{\mathrm{KF}})^\dagger
\Cov(\hat z_\infty^{\mathrm{KF}},\zeta_\infty)
\end{align}
For the Kalman filter, define $e_\infty^{\mathrm{KF}}:=x_\infty-\hat x_\infty^{\mathrm{KF}}$. Then
\begin{align*}
x_\infty=\hat x_\infty^{\mathrm{KF}}+e_\infty^{\mathrm{KF}},
\end{align*}
and the orthogonality principle~\cite{anderson2005optimal} yields
\begin{align*}
\Cov(e_\infty^{\mathrm{KF}},\hat x_\infty^{\mathrm{KF}})=0.
\end{align*}
Hence, we have
\begin{align*}
\Sigma_\infty
&=
\Cov(x_\infty) = \Cov(\hat x_\infty^{\mathrm{KF}} + e_\infty^{\mathrm{KF}}) \\
&=
\Cov(\hat x_\infty^{\mathrm{KF}})+\Cov(e_\infty^{\mathrm{KF}}).
\end{align*}
Since $P^{\mathrm{KF}}=\Cov(e_\infty^{\mathrm{KF}})$, it follows that
\begin{align*}
\Cov(x_\infty,\hat x_\infty^{\mathrm{KF}})
=
\Cov(\hat x_\infty^{\mathrm{KF}})
=
\Sigma_\infty-P^{\mathrm{KF}}.
\end{align*}
Therefore, this implies that $
    \Cov(\zeta_\infty,\hat z_\infty^{\mathrm{KF}}) = F(\Sigma_\infty-P^{\mathrm{KF}}) H^{\top} $ and $\Cov(\hat{z}_{\infty}^{\mathrm{KF}}) = H(\Sigma_\infty-P^{\mathrm{KF}})H^{\top}$. 
Substituting these covariance identities into~\eqref{eq:prop_1} gives
$\Gamma_{\zeta\zeta}^{\mathrm{KF}}$ as stated, and the expression for
$\ell^{\mathrm{KF}}$ follows from $ \Cov(\zeta_\infty,\hat z_\infty^{\mathrm{KF}}) = F(\Sigma_\infty-P^{\mathrm{KF}}) H^{\top}$.
\QED
\section{Proof of Theorem~\ref{thm:kalman_lp}}
\label{app:1}
Since our analysis is carried out in steady state, we assume that the system state in~\eqref{eq:LTI} and the estimate in~\eqref{eq:estimator} are initialized in their steady-state distribution at time $k=0$. Then, it suffices to prove the result for arbitrary $k \ge 0$. 

We divide the proof into three parts:
(i) We show that $\Gamma_{\zeta \zeta}^{\mathrm{KF}} = \Cov(\zeta_k \mid \hat{z}_k^{\mathrm{KF}})$ is minimal over all $\Gamma_{\zeta \zeta}$ generated by $K \in \mathcal{K}$.
(ii) We prove that $\mathbf{F}_{\ell^{\mathrm{KF}}}^{-1}(1-\delta) \ge \mathbf{F}_{\ell}^{-1}(1-\delta)$ for every $\ell = \operatorname{rank}(\Cov(\zeta_k,\hat{z}))$ induced by any $K \in \mathcal{K}$.
(iii) We combine the results in (i) and (ii) to prove the theorem.

% (i) Under the linear Gaussian model in~\eqref{eq:LTI} and \eqref{eq:estimator}, the pair $(x_k,\hat{x}_k)$ is jointly Gaussian for every gain $K \in \mathcal{K}$. It follows that the conditional mean of $x_k$ given $\hat{z}_k$ depends linearly on $\hat{z}_k$, namely,
% \begin{align}
% \label{eq:barxK_def_no_sigma}
% \bar x_k
% :=
% \mathbb E[x_k\mid \hat z_k]
% =
% \Sigma_{x\hat z}(\Sigma_{\hat z \hat{z}} )^{\dagger} \hat z_k,
% \end{align}
% where $\Sigma_{x\hat z}:=\operatorname{Cov}(x_k,\hat z_k)$ and $\Sigma_{\hat z \hat{z}}:=\operatorname{Cov}(\hat z_k)$.

% Define the residual
% \begin{align*}
%     r_k:=x_k-\bar x_k.
% \end{align*}
% By the orthogonality principle of conditional expectation and the Gaussianity of the variables, $r_k$ is independent of $\hat z_k$~\cite[Theorem 3.1]{poor1994signal}. Consequently, the conditional covariance of $x_k$ given $\hat z_k$ is exactly the covariance of $ r_k$. This yields
%  \begin{align}
%      &\Cov(\zeta_k\mid \hat z_k) \nonumber \\
%      &= F\Cov(x_k - \bar{x}_k)F^{\top} = \Cov(\zeta_k - F\bar{x}_k). \label{eq:thm1_cov}
%  \end{align}

(i) Let $e_k^{\mathrm{KF}}:=x_k-\hat x_k^{\mathrm{KF}}$. By the
orthogonality principle of the Kalman filter\cite[Sec.~5.2]{anderson2005optimal} and the joint Gaussianity of
the variables, $e_k^{\mathrm{KF}}$ is independent of
$\hat x_k^{\mathrm{KF}}$. Therefore, $e_k^{\mathrm{KF}}$ is also independent of
$\hat z_k^{\mathrm{KF}}=H\hat x_k^{\mathrm{KF}}$. Moreover,
$\mathbb E[e_k^{\mathrm{KF}}]=0$. Since $F=LH$, we have
\begin{align*}
\zeta_k
=
Fx_k
=
LH(\hat x_k^{\mathrm{KF}}+e_k^{\mathrm{KF}})
=
L\hat z_k^{\mathrm{KF}}+LH e_k^{\mathrm{KF}} .
\end{align*}
Taking the conditional expectation with respect to
$\hat z_k^{\mathrm{KF}}$ gives
\begin{align*}
\mathbb E[\zeta_k\mid \hat z_k^{\mathrm{KF}}]
&=
L\hat z_k^{\mathrm{KF}}
+
LH\,\mathbb E[e_k^{\mathrm{KF}}\mid \hat z_k^{\mathrm{KF}}] \\
&=
L\hat z_k^{\mathrm{KF}} .
\end{align*}
Therefore, it follows that
\begin{align*}
\Cov(\zeta_k\mid \hat z_k^{\mathrm{KF}})
&=
\Cov\bigl(\zeta_k-\mathbb E[\zeta_k\mid \hat z_k^{\mathrm{KF}}]\bigr) \\
&=
\Cov(\zeta_k-L\hat z_k^{\mathrm{KF}}).
\end{align*}

Since $\hat z_k$ is generated by a stable linear estimator of the
form~\eqref{eq:estimator}, it is a linear function of the measurement
history. Under the joint Gaussian assumption,
$\mathbb E[\zeta_k\mid \hat z_k]$ is linear in $\hat z_k$, and hence is
also a linear estimate of $\zeta_k$ based on the same measurement history.
By the linear minimum-mean-square-error property of the Kalman filter, the
estimate $F\hat x_k^{\mathrm{KF}}$ has no larger error covariance than
this estimate. Therefore, we have
\begin{align*}
\Cov(\zeta_k-\mathbb E[\zeta_k\mid \hat z_k])
\succeq
\Cov(\zeta_k-F\hat x_k^{\mathrm{KF}}).
\end{align*}
Since $F=LH$ and $\hat z_k^{\mathrm{KF}}=H\hat x_k^{\mathrm{KF}}$, we have
$F\hat x_k^{\mathrm{KF}}=L\hat z_k^{\mathrm{KF}}$. Thus, it follows that $\Cov(\zeta_k-\mathbb E[\zeta_k\mid \hat z_k])
\succeq
\Cov(\zeta_k-L\hat z_k^{\mathrm{KF}})$.
By noticing that $\Cov(\zeta_k-\mathbb E[\zeta_k\mid \hat z_k]) = \Cov (\zeta_k \mid \hat z_k)$, we have
\begin{align*}
\Cov(\zeta_k\mid \hat z_k)
\succeq
\Cov(\zeta_k\mid \hat z_k^{\mathrm{KF}}).
\end{align*}

(ii) Let $\ell^{\mathrm{KF}} := \operatorname{rank}(\Cov(\zeta_k, \hat{z}^{\mathrm{KF}}))$. Since 
$\mathbf{F}^{-1}_{\chi^2_{\ell}}$ is monotone increasing with respect to $\ell$, it suffices to prove $\ell^{\mathrm{KF}} \geq \ell$. 

By the orthogonality principle for the Kalman filter again, one has
\begin{align*}
\mathbb{E}[(x_k - \hat{x}_k^{\mathrm{KF}})\hat{x}_k^\top] = 0,
\end{align*}
which implies that $\mathbb{E}[(z_k - \hat{z}_k^{\mathrm{KF}})\hat{z}_k^\top] = 0$. This yields
\begin{align}
&\Sigma_{z\hat{z}} := \Cov(z_k, \hat{z}_k) \nonumber\\
\label{eq:thm1_3}
&= \mathbb{E}[(\hat{z}_k^{\mathrm{KF}} + z_k - \hat{z}_k^{\mathrm{KF}})\hat{z}_k^\top]= \mathbb{E}[\hat{z}_k^{\mathrm{KF}} \hat{z}_k^\top].
\end{align}
Similarly, we have
\begin{align*}
\Sigma_{z\hat{z}^{\mathrm{KF}}} &:= \Cov(z_k, \hat{z}_k^{\mathrm{KF}}) = \mathbb{E}[\hat{z}_k^{\mathrm{KF}}(\hat{z}_k^{\mathrm{KF}})^\top].
\end{align*}
Now let $a \in \ker(\Sigma_{z\hat{z}^{\mathrm{KF}}})$. Then, it can be shown that
\begin{align*}
a^\top \mathbb{E}[\hat{z}_k^{\mathrm{KF}}(\hat{z}_k^{\mathrm{KF}})^\top] a
= \mathbb{E}[(a^\top \hat{z}_k^{\mathrm{KF}})^2] = 0.
\end{align*}
Since $(a^\top \hat{z}_k^{\mathrm{KF}})^2 \ge 0$ almost surely, it follows that
\begin{align*}
a^\top \hat{z}_k^{\mathrm{KF}} = 0
\qquad \text{a.s.}
\end{align*}
Invoking~\eqref{eq:thm1_3}, we have $a^\top \Sigma_{z\hat z}= 0$,
which shows that
\begin{align*}
\ker(\Sigma_{z\hat z^{\mathrm{KF}}}) \subseteq \ker(\Sigma_{z\hat z}^{\top}).
\end{align*}
Using $\operatorname{Im}(\Sigma_{z\hat z})=\ker(\Sigma_{z\hat z}^\top)^\perp$ and the symmetry of
$\Sigma_{z\hat z^{\mathrm{KF}}}$, we obtain
\begin{align}
&\operatorname{Im}(\Sigma_{z\hat z})
=
\ker(\Sigma_{z\hat z}^{\top})^\perp \nonumber\\
&\subseteq
\ker(\Sigma_{z\hat z^{\mathrm{KF}}})^\perp 
= \ker(\Sigma^{\top}_{z\hat z^{\mathrm{KF}}})^\perp =
\operatorname{Im}(\Sigma_{z\hat z^{\mathrm{KF}}}). \label{eq:thm1_4}
\end{align}
Applying the linear map $L$ to the inclusion in~\eqref{eq:thm1_4} gives $\operatorname{Im}(L\Sigma_{z\hat z})
\subseteq
\operatorname{Im}(L\Sigma_{z\hat z^{\mathrm{KF}}})$,
and hence
\begin{align*}
\rank(L\Sigma_{z\hat z})
\le
\rank(L\Sigma_{z\hat z^{\mathrm{KF}}}).
\end{align*}
This implies that $\ell_{\mathrm{KF}} \geq \ell$ by noticing that $\Cov(\zeta_k, \hat{z}_k) = L \Sigma_{z \hat{z}}$, which completes the proof of this part.

(iii) Recall that
\begin{align*}
\varepsilon(K,\delta) =
\frac{1}{2}\mathbf{F}^{-1}_{\chi^2_\ell}(1-\delta)
+\log\det(\Sigma_{\zeta\zeta})
-\log\det(\Gamma_{\zeta\zeta}).
\end{align*}
By~\eqref{eq:positive_Gamma}, we have $\Gamma_{\zeta\zeta}^{\mathrm{KF}}\succ 0$.
This, together with part (i), shows that $ 0 \prec \Gamma_{\zeta\zeta}^{\mathrm{KF}} \preceq \Gamma_{\zeta\zeta}$. Therefore, it follows that 
\begin{align*}
\log\det(\Gamma_{\zeta\zeta}^{\mathrm{KF}})
\le
\log\det(\Gamma_{\zeta\zeta}).
\end{align*}
By part (ii), it holds that 
\begin{align*}
\mathbf{F}^{-1}_{\chi^2_{\ell}}(1-\delta) \leq \mathbf{F}^{-1}_{\chi^2_{{\ell}^{\mathrm{KF}}}}(1-\delta).
\end{align*}
Combining these two inequalities yields the desired result, which completes the proof.
\QED

\section{Proof of Proposition~\ref{prop:pml_error_lower_bound}}
\label{app:prop4_2}
By Assumption~\ref{asm:rank_F} and the relation $F=LH$, the matrix $L$ has full row rank. Thus, $LL^\top\succ0$, and the right-hand side of~\eqref{eq:pml_error_lower_bound} is well defined.

If $\ell=0$, then $\Sigma_{\zeta\hat z}=0$. Since the variables are jointly Gaussian, $\zeta_\infty$ and $\hat z_\infty$ are independent, and hence
$\Gamma_{\zeta\zeta}=\Sigma_{\zeta\zeta}$. Therefore, it holds that 
\begin{align}
\label{eq:det_gamma_lower_bound_positive_rank}
\det(\Gamma_{\zeta\zeta})
\ge
\det(\Sigma_{\zeta\zeta})
\kappa_1(\varepsilon,\delta)^r.
\end{align}

If $\ell>0$, then Lemma~\ref{lem:gaussian_pml_characterization} gives
\begin{align*}
\fF^{-1}_{\chi^2_\ell}(1-\delta)
\le
2\varepsilon
-\log\det(\Sigma_{\zeta\zeta})
+\log\det(\Gamma_{\zeta\zeta}).
\end{align*}
Since $\ell\ge1$ and $\fF^{-1}_{\chi^2_j}(1-\delta)$ is non-decreasing in $j$, we have $\fF^{-1}_{\chi^2_1}(1-\delta)
\le
\fF^{-1}_{\chi^2_\ell}(1-\delta)$. Then, it follows that 
\begin{align*}
    \fF^{-1}_{\chi^2_1}(1-\delta)
\le
2\varepsilon
-\log\det(\Sigma_{\zeta\zeta})
+\log\det(\Gamma_{\zeta\zeta}).
\end{align*}
 This further yields that $\det(\Gamma_{\zeta\zeta})
\ge
\det(\Sigma_{\zeta\zeta})
\kappa_1(\varepsilon,\delta)^r$, which is equivalent to~\eqref{eq:det_gamma_lower_bound_positive_rank}.

Since $F=LH$, we have $\zeta_\infty=Lz_\infty$. Hence, $L\hat z_\infty$ is an estimator of $\zeta_\infty$ based on $\hat z_\infty$. By the minimum mean-square error property of conditional expectation,
\begin{align*}
\Gamma_{\zeta\zeta}
=
\Cov(\zeta_\infty\mid \hat z_\infty)
\preceq
\Cov(\zeta_\infty-L\hat z_\infty).
\end{align*}
Using $\zeta_\infty=Lz_\infty$, this yields
\begin{align}
\label{eq:gamma_LEL}
\Gamma_{\zeta\zeta}
\preceq
L\Cov(z_\infty-\hat z_\infty)L^\top.
\end{align}

Since $0\preceq L^\top(LL^\top)^{-1}L\preceq I_s$, it follows that
\begin{align*}
&\tr(\Cov(z_\infty-\hat z_\infty)) \\
&\ge
\tr\left(\Cov(z_\infty-\hat z_\infty)L^\top(LL^\top)^{-1}L\right) \\
&=
\tr\left((LL^\top)^{-1}L \Cov(z_\infty-\hat z_\infty) L^\top\right).
\end{align*}
Combining this with~\eqref{eq:gamma_LEL} gives
\begin{align}
\label{eq:trace_gamma_lower_bound}
\tr\left(\Cov(z_\infty-\hat z_\infty)\right)
\ge
\tr\left((LL^\top)^{-1}\Gamma_{\zeta\zeta}\right).
\end{align}

Finally, applying the arithmetic--geometric mean inequality~\cite[Problem~2.3.P10]{horn2012matrix}, we have 
\begin{align*}
\tr\left((LL^\top)^{-1}\Gamma_{\zeta\zeta}\right)
&=
\tr\left((LL^\top)^{-1/2}
\Gamma_{\zeta\zeta}
(LL^\top)^{-1/2}\right) \\
&\ge
r
\left(
\det\left(
(LL^\top)^{-1/2}
\Gamma_{\zeta\zeta}
(LL^\top)^{-1/2}
\right)
\right)^{1/r} \\
&=
r
\left(
\frac{\det(\Gamma_{\zeta\zeta})}{\det(LL^\top)}
\right)^{1/r}.
\end{align*}
Substituting~\eqref{eq:det_gamma_lower_bound_positive_rank} into the above inequality proves~\eqref{eq:pml_error_lower_bound}.
\QED

\section{Proof of Theorem~\ref{thm:noise_KF}}
\label{app:2}
To prove the theorem, we first present the following auxiliary lemma, which gives a sufficient condition for adding noise to preserve $(\varepsilon, \delta)$-PML.
{
\renewcommand{\thesecthm}{\ref{app:2}.1}
\begin{seclem}
\label{lem:noise_add}
For any finite $\varepsilon \geq 0$ and $\delta \in (0,1)$ satisfying Assumption~\ref{asm:privacy},
the estimator with gain $K$ and noise covariance $\Xi$ preserves $(\varepsilon, \delta)$-PML privacy for the secret variable $\zeta_\infty$ if the following LMI holds:
\begin{align*}
\begin{bmatrix}
(1-\kappa_{\ell}(\varepsilon, \delta))\Sigma_{\zeta\zeta} & \Sigma_{\zeta \hat x}H^\top\\
H \Sigma_{\hat x\zeta} & H \Sigma_{\hat x\hat x}H^\top+\Xi
\end{bmatrix}\succeq 0,
\end{align*}
where $\ell:=\rank(\Sigma_{\zeta\hat z})$ and $\kappa_{\ell}(\varepsilon,\delta)$ is defined in~\eqref{eq:kappa}.
\end{seclem}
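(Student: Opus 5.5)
The plan is to reduce the claim to Lemma~\ref{lem:gaussian_pml_characterization}. We need $\Gamma_{\zeta\zeta}\succ0$ and inequality~\eqref{eq:lem_privacy}. By the definition of $\kappa_\ell(\varepsilon,\delta)$ in~\eqref{eq:kappa}, that inequality is equivalent to
\begin{align*}
\det(\Gamma_{\zeta\zeta}) \ge \kappa_\ell(\varepsilon,\delta)^r \det(\Sigma_{\zeta\zeta}).
\end{align*}
It therefore suffices to prove the matrix inequality $\Gamma_{\zeta\zeta}\succeq \kappa_\ell(\varepsilon,\delta)\Sigma_{\zeta\zeta}$. Both positive definiteness and the determinant bound then follow. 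Positive definiteness holds because $\Sigma_{\zeta\zeta}\succ0$ under Assumptions~\ref{asm:rank_F}--\ref{asm:2} and $\kappa_\ell>0$. The determinant bound holds because $\det$ is monotone on the positive semidefinite cone.

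First I would record the steady-state covariances. The secret and the release are jointly Gaussian, $\xi_\infty$ is independent of $(x_\infty,\hat x_\infty)$, and $\hat z_\infty=H\hat x_\infty+\xi_\infty$. Hence
\begin{align*}
\Sigma_{\zeta\hat z}=\Sigma_{\zeta\hat x}H^\top, \qquad \Cov(\hat z_\infty)=H\Sigma_{\hat x\hat x}H^\top+\Xi .
\end{align*}
The Gaussian conditioning formula with the pseudoinverse then gives
\begin{align*}
\Gamma_{\zeta\zeta}=\Sigma_{\zeta\zeta}-\Sigma_{\zeta\hat z}\Cov(\hat z_\infty)^\dagger\Sigma_{\hat z\zeta}.
\end{align*}
This formula is valid even when $\Cov(\hat z_\infty)$ is singular, because $\operatorname{Im}(\Sigma_{\hat z\zeta})\subseteq\operatorname{Im}(\Cov(\hat z_\infty))$ for any joint covariance.

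The key step applies the generalized Schur complement lemma to the LMI in the statement. Its lower-right block $M:=H\Sigma_{\hat x\hat x}H^\top+\Xi$ is positive semidefinite, and the range condition $\operatorname{Im}(H\Sigma_{\hat x\zeta})\subseteq\operatorname{Im}(M)$ holds, as just noted. Positive semidefiniteness of the block matrix is then equivalent to
\begin{align*}
(1-\kappa_\ell)\Sigma_{\zeta\zeta}-\Sigma_{\zeta\hat z}M^\dagger\Sigma_{\hat z\zeta}\succeq0 .
\end{align*}
Rearranging this gives exactly $\Gamma_{\zeta\zeta}\succeq\kappa_\ell\Sigma_{\zeta\zeta}$.

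Assumption~\ref{asm:privacy} together with $\ell\le r$ ensures $0<\kappa_\ell<1$ when $\ell\ge1$. When $\ell=0$, the convention $\fF^{-1}_{\chi^2_0}(1-\delta)=0$ gives $\kappa_0=e^{-2\varepsilon/r}\in(0,1]$. In that case $\Gamma_{\zeta\zeta}=\Sigma_{\zeta\zeta}$ anyway, so the conclusion is immediate. I would then invoke the sufficiency direction of Lemma~\ref{lem:gaussian_pml_characterization} to finish.

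The main obstacle I expect is the careful handling of the singular case of $\Cov(\hat z_\infty)$. Both the use of the pseudoinverse in the conditioning formula and the generalized Schur complement step need the range inclusion, which I would justify from positive semidefiniteness of the joint covariance of $(\zeta_\infty,\hat z_\infty)$. A second point to check is that the rank $\ell$ appearing in the lemma coincides with the $\ell$ in Lemma~\ref{lem:gaussian_pml_characterization}. It does, since both are $\rank(\Sigma_{\zeta\hat x}H^\top)$, which is independent of $\Xi$.
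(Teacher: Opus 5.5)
Your proof is correct, but it takes a different route from the paper. The paper proves this lemma in one line by citing Theorem~3.6 of \cite{liu2025privacy}. You instead derive it inside the paper from Lemma~\ref{lem:gaussian_pml_characterization}. Your chain runs: the LMI gives $\Gamma_{\zeta\zeta}\succeq\kappa_\ell(\varepsilon,\delta)\Sigma_{\zeta\zeta}$ by the generalized Schur complement, and this gives $\det\Gamma_{\zeta\zeta}\ge\kappa_\ell(\varepsilon,\delta)^r\det\Sigma_{\zeta\zeta}$, which is exactly \eqref{eq:lem_privacy}.

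This route makes several things explicit that the citation leaves hidden:
\begin{itemize}
\item Since $M\succeq0$ and the range inclusion hold automatically for any joint covariance, the LMI is in fact equivalent to the matrix inequality $\Gamma_{\zeta\zeta}\succeq\kappa_\ell\Sigma_{\zeta\zeta}$. For $r\ge2$ this is strictly stronger than the determinant condition that is actually needed, which explains why the lemma is only sufficient.
\item You pass through Lemma~\ref{lem:gaussian_pml_characterization} and use the pseudoinverse form of Gaussian conditioning. So your argument covers designs with singular $\Cov(\hat z_\infty)$, the case the paper singles out in the remark after that lemma.
\item The only property of $\kappa_\ell$ you use is $\kappa_\ell>0$, which holds by definition. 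Assumption~\ref{asm:privacy} and your observation $\kappa_\ell<1$ are therefore not needed for this implication.
\end{itemize}

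Two simplifications are available:
\begin{itemize}
\item For the direction you need, the range condition follows from the LMI itself, so you need not verify it separately.
\item The separate treatment of $\ell=0$ is redundant. With $\fF^{-1}_{\chi^2_0}(1-\delta)=0$, the general argument already gives $\det\Gamma_{\zeta\zeta}\ge e^{-2\varepsilon}\det\Sigma_{\zeta\zeta}$, which is \eqref{eq:lem_privacy} for $\ell=0$.
\end{itemize}

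What the paper's citation gains is brevity and reuse of an already-published result. What it loses is that the reader cannot check, within the paper, that the cited result's hypotheses hold in the present estimator-release setting.
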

}
\begin{proof}
    The proof follows directly from Theorem~3.6 in \cite{liu2025privacy}.
\end{proof}

\begin{proof}[{Proof of Theorem~\ref{thm:noise_KF}}]
By Lemma~\ref{lem:noise_add} and the Schur complement lemma, the released $\hat{z}^{\mathrm{KF}}_{\infty}$ can preserve $(\varepsilon,\delta)$-PML if
\begin{align}
& H \Sigma_{\hat x^{\mathrm{KF}}\hat x^{\mathrm{KF}}}H^\top+\Xi \nonumber \\
& \label{eq:thm2_1} \succeq
\frac{1}{1-\kappa_{\ell^{\mathrm{KF}}} (\varepsilon,\delta)}
H \Sigma_{\hat x^{\mathrm{KF}} \zeta}\Sigma_{\zeta\zeta}^{-1}\Sigma_{\zeta \hat x^{\mathrm{KF}}}H^\top,
\end{align}
{where~$\Sigma_{\hat{x}^{\mathrm{KF}}\zeta} = (\Sigma_{\infty} - P^{\mathrm{KF}}) F^{\top}$ as shown in the proof of Proposition~\ref{prop:kf_eps_delta_generalF}.}

{We show~\eqref{eq:noise_XiKF} implies~\eqref{eq:thm2_1}.} Since we have
\begin{align*}
    \begin{bmatrix}
\Sigma_{\zeta\zeta} & \Sigma_{\zeta \hat x^{\mathrm{KF}}}H^\top\\
H \Sigma_{\hat x^{\mathrm{KF}}\zeta} & H \Sigma_{\hat x^{\mathrm{KF}}\hat x^{\mathrm{KF}}}H^\top
\end{bmatrix}\succeq 0,
\end{align*}
the Schur complement lemma again implies that $H \Sigma_{\hat x^{\mathrm{KF}}\hat x^{\mathrm{KF}}}H^\top \succeq 
H \Sigma_{\hat x^{\mathrm{KF}}\zeta}\Sigma_{\zeta\zeta}^{-1}\Sigma_{\zeta \hat x^{\mathrm{KF}}}H^\top$.
{Thus,~\eqref{eq:noise_lb} implies~\eqref{eq:thm2_1}.}
% Substituting this into~\eqref{eq:thm2_1}, it suffices to prove
% \begin{align*}
% \Xi
% &\succeq
% \frac{\kappa_{\ell^{\mathrm{KF}}} (\varepsilon,\delta)}{1-\kappa_{\ell^{\mathrm{KF}}} (\varepsilon,\delta)}
% H \Sigma_{\hat x^{\mathrm{KF}}\zeta}\Sigma_{\zeta\zeta}^{-1}\Sigma_{\zeta \hat x^{\mathrm{KF}}}H^\top.
% \end{align*}
%  As shown in the proof of Proposition~\ref{prop:kf_eps_delta_generalF}, $\Sigma_{\hat{x}^{\mathrm{KF}}\zeta} = (\Sigma_{\infty} - P^{\mathrm{KF}}) F^{\top}$. The remaining proof follows directly.
 \end{proof}

\section{Proof of Theorem~\ref{thm:noise_lb}}
\label{app:3}

In the following, we split the proof into two steps. (i) In the first step, we provide an upper bound on $\|\Sigma_{\zeta \hat{x}} H^{\top} \|_{F} $. (ii) Then, we apply Lemma~\ref{lem:noise_add} to obtain the lower bound for the noise covariance $\Xi$.

(i) For each $k \ge 0$, the estimator admits the finite-horizon representation
\begin{align*}
\hat x_k
=
A_e^k \hat x_0
+
\sum_{i=0}^{k-1} A_e^i K y_{k-i}.
\end{align*}
Recalling that $\zeta_k = Fx_k$, we obtain
\begin{align*}
\Cov(\zeta_k,\hat x_k)
&=
\Cov(\zeta_k, A_e^k \hat x_0)
+
\sum_{i=0}^{k-1}\Cov(\zeta_k, y_{k-i})\,K^\top (A_e^i)^\top.
\end{align*}
Then, it follows that
\begin{align*}
\Cov(\zeta_k, y_{k-i})
&=
F\,\Cov(x_k,x_{k-i})\,C^\top.
\end{align*}
Therefore, we obtain
\begin{align*}
\Cov(\zeta_k,\hat x_k)
&=
\Cov(\zeta_k, A_e^k \hat x_0)
\\
&+ 
\sum_{i=0}^{k-1}
F\,\Cov(x_k,x_{k-i})\,C^\top K^\top (A_e^i)^\top.
\end{align*}

Under Assumption~\ref{asm:2}, the state process converges to its steady state. In particular, for each fixed $k \ge i \ge 0$,
\begin{align*}
\lim_{k\to\infty}\Cov(x_k,x_{k-i}) = A^i\Sigma_\infty.
\end{align*}
Moreover, since $A_e$ is Schur stable, the term $\Cov(\zeta_k, A_e^k \hat x_0)$ vanishes as $k\to\infty$. Therefore, we have
\begin{align*}
\Sigma_{\zeta\hat x}
=
\Cov(\zeta_\infty,\hat x_\infty)
&=
\sum_{i=0}^{\infty}
F A^i\Sigma_\infty C^\top K^\top (A_e^i)^\top.
\end{align*}
Applying the triangle inequality, we obtain
\begin{align*}
\|\Sigma_{\zeta \hat x} H^\top \|_F
=&
\left\|\sum_{i=0}^{\infty} F A^i\Sigma_\infty C^\top K^\top (A_e^i)^\top H^\top\right\|_F \\
\le &
\sum_{i=0}^{\infty} \|F A^i\Sigma_\infty C^\top K^\top (A_e^i)^\top H^\top \|_F.
\end{align*}
Then, applying the inequality $\|M_i N_i^\top\|_F \le \|M_i\|_F\|N_i\|_2$ to each term with
$M_i = FA^i\Sigma_\infty C^\top,
N_i = H A_e^i K,$
yields
\begin{align*}
\|\Sigma_{\zeta \hat x} H^\top \|_F
\le
\sum_{i=0}^{\infty} \|F A^i\Sigma_\infty C^\top\|_F \|H A_e^i K\|_2.
\end{align*}
By the Cauchy--Schwarz inequality for sequences, we have
\begin{align*}
\|\Sigma_{\zeta \hat x} H^\top \|_F^2
&\le
\Bigl(\sum_{i=0}^{\infty}\|F A^i\Sigma_\infty C^\top\|_F^2\Bigr)
\Bigl(\sum_{i=0}^{\infty}\|H A_e^i K \|_2^2\Bigr).
\end{align*}
Furthermore, it holds that
\begin{align*}
\sum_{i=0}^{\infty}\|F A^i\Sigma_\infty C^\top\|_F^2
&=
\sum_{i=0}^{\infty}
\operatorname{tr}\!\Bigl(
F A^i\Sigma_\infty C^\top C\Sigma_\infty (A^i)^\top F^\top
\Bigr) \\
&=
\operatorname{tr}\!\Bigl(
\sum_{i=0}^{\infty}(A^i)^\top F^\top F A^i
\,\Sigma_\infty C^\top C\Sigma_\infty
\Bigr) \\
&=
\operatorname{tr}\!\bigl(
W_F \Sigma_\infty C^\top C\Sigma_\infty
\bigr).
\end{align*}
In summary, we have
\begin{align}
\label{eq:cross_bound_final}
\|\Sigma_{\zeta \hat x} H^\top \|_F^2
\le
\operatorname{tr}\!\bigl(
W_F \Sigma_\infty C^\top C\Sigma_\infty
\bigr)
\sum_{i=0}^{\infty}\|H A_e^i K \|_2^2.
\end{align}

(ii) Following arguments similar to those in the proof of Theorem~\ref{thm:noise_KF}, it suffices to prove
\begin{align}
\label{eq:thm2_2}
\Xi
&\succeq
\frac{\kappa_\ell (\varepsilon,\delta)}{1-\kappa_\ell (\varepsilon,\delta)}
H \Sigma_{\hat x\zeta}\Sigma_{\zeta\zeta}^{-1}\Sigma_{\zeta \hat x}H^\top.
\end{align}

Combining~\eqref{eq:noise_lb} with~\eqref{eq:cross_bound_final} yields
\begin{align*}
    \Xi \succeq
\frac{\kappa_r(\varepsilon,\delta)}{1-\kappa_r(\varepsilon,\delta)}
\|\Sigma_{\zeta\zeta}^{-1}\|_2
\|\Sigma_{\zeta \hat x} H^\top \|_F^2 I_{s}.
\end{align*}
Moreover, since $r \ge \ell$ by the definition of $\ell$ and $\kappa_r(\varepsilon,\delta) \ge \kappa_\ell(\varepsilon,\delta)$ by the monotonicity of $\mathbf{F}_{\chi^2_\ell}^{-1}$ with respect to the degrees of freedom, it follows that
\begin{align*}
\frac{\kappa_r(\varepsilon,\delta)}{1-\kappa_r(\varepsilon,\delta)}
\ge
\frac{\kappa_\ell(\varepsilon,\delta)}{1-\kappa_\ell(\varepsilon,\delta)}.
\end{align*}
Therefore, it holds that 
\begin{align*}
    \Xi \succeq
\frac{\kappa_{\ell}(\varepsilon,\delta)}{1-\kappa_{\ell}(\varepsilon,\delta)}
\|\Sigma_{\zeta\zeta}^{-1}\|_2
\|\Sigma_{\zeta \hat x} H^\top \|_F^2 I_{s}.
\end{align*}
Applying
\begin{align*}
\lambda_{\max}(MNM^\top)\le \|N\|_2 \|M\|_F^2,
\end{align*}
with $M= \Sigma_{\zeta \hat x} H^\top$ and $N=\Sigma_{\zeta\zeta}^{-1}$, we have
\begin{align*}
\Xi
&\succeq
\frac{\kappa_\ell (\varepsilon,\delta)}{1-\kappa_\ell (\varepsilon,\delta)}
\lambda_{\max}(H \Sigma_{\hat x\zeta}\Sigma_{\zeta\zeta}^{-1}\Sigma_{\zeta \hat x}H^\top)  I_{s},
\end{align*}
which implies~\eqref{eq:thm2_2}.
\QED

\section{Proof of Theorem~\ref{thm:convex_opt}}
\label{app:4}
 The proof is divided into three steps: (i) we show that the estimator~\eqref{eq:estimator} is stable under the choice $K=-X^{-1}U^\top$; (ii) we prove that the estimator with $K=-X^{-1}U^\top$ and $\Xi=\frac{\beta(\varepsilon,\delta)}{s}\operatorname{tr}(W)I_s$ guarantees the required privacy constraint; and (iii) we derive the corresponding upper bound on the estimation error.

(i) Define the estimation error
\begin{align*}
e_k:=x_k-\hat x_k.
\end{align*}
Using \eqref{eq:LTI} and \eqref{eq:estimator}, the error dynamics can be written as
\begin{align*}
e_{k+1}
&=A_e e_k + B_e \eta_k.
\end{align*}
where $A_e = A -KCA$, $B_e:=\begin{bmatrix}(I-KC)B & -KD\end{bmatrix}$, and $\eta_k:=\begin{bmatrix}w_k\\ v_{k+1}\end{bmatrix}$.
Since $U=-K^\top X$, it follows that
\begin{align*}
A^\top X + A^\top C^\top U
&=A^\top X-A^\top C^\top K^\top X
=A_e^\top X. 
\end{align*}
Using this equality, condition \eqref{eq:LMI1} can be rewritten as
\begin{align*}
\begin{bmatrix}
X & A_e^\top X & H^\top\\
X A_e & X & 0\\
H & 0 & I
\end{bmatrix}\succ0.
\end{align*}
Applying the Schur complement first with respect to the $(3,3)$ block and then with respect to the $(2,2)$ block yields
\begin{align*}
X \succ A_e^\top X A_e + H^\top H.
\end{align*}
Therefore, $A_e$ is Schur stable. This completes the proof of part (i).

(ii) We now show that the estimator with $K=-X^{-1}U^\top$ and $\Xi=\frac{\beta(\varepsilon,\delta)}{s}\operatorname{tr}(W)I_s$ satisfies the required privacy constraint.

By the Schur complement, condition~\eqref{eq:LMI3} implies
\begin{align*}
W \succeq U X^{-1}U^\top.
\end{align*}
Since $U=-K^\top X$, we have
\begin{align*}
U X^{-1}U^\top
=K^\top X X^{-1} X K =K^\top X K.
\end{align*}
Hence,
\begin{align*}
W\succeq K^\top X K.
\end{align*}
Taking traces on both sides gives
\begin{align*}
\operatorname{tr}(W)\ge \operatorname{tr}(K^\top X K).
\end{align*}
From part~(i), we already know that
\begin{align*}
X\succeq P,
\end{align*}
where $P$ is the unique positive semidefinite solution of
\begin{align*}
P=A_e^\top P A_e + H^\top H,
\end{align*}
or equivalently,
\begin{align}
\label{eq:series_P}
P=\sum_{i=0}^{\infty}(A_e^i)^\top H^\top H A_e^i.
\end{align}
Therefore, it follows that
\begin{align}
\label{eq:thm4_W}
\operatorname{tr}(W)
\ge \operatorname{tr}(K^\top X K) \ge \operatorname{tr}(K^\top P K).
\end{align}
Using the series representation of $P$ in~\eqref{eq:series_P}, we obtain
\begin{align*}
\operatorname{tr}(K^\top P K)
&=
\operatorname{tr}\!\left(
K^\top
\sum_{i=0}^{\infty}(A_e^i)^\top H^\top H A_e^i
K
\right) \\
&=
\sum_{i=0}^{\infty}
\operatorname{tr}\!\left(
K^\top (A_e^i)^\top H^\top H A_e^i K
\right) \\
&=
\sum_{i=0}^{\infty}\|H A_e^i K\|_F^2.
\end{align*}
This, together with~\eqref{eq:thm4_W}, implies that
\begin{align*}
\operatorname{tr}(W)\ge \sum_{i=0}^{\infty}\|H A_e^i K\|_F^2 \ge  \sum_{i=0}^{\infty}\|H A_e^i K\|_2^2.
\end{align*}
By the sufficient privacy condition established in Theorem~\ref{thm:noise_lb}, it follows that choosing
\begin{align*}
\Xi
=
\frac{\kappa_r (\varepsilon,\delta)\|\Sigma_{\zeta\zeta}^{-1}\|_2}{1-\kappa_r (\varepsilon,\delta)}
\operatorname{tr}\!\bigl(
W_F \Sigma_\infty C^\top C\Sigma_\infty
\bigr)\operatorname{tr}(W)\,I_s
\end{align*}
guarantees $(\varepsilon,\delta)$-PML privacy for the secret variable $\zeta_\infty$.
This is exactly the chosen noise covariance, which completes the proof of part~(ii).

(iii) We now derive the upper bound on the steady-state estimation error. By $U=-K^\top X$, we have
\begin{align*}
XB+U^\top C B
=XB-XKCB
=X(I-KC)B
\end{align*}
and
\begin{align*}
U^\top D =-XKD.
\end{align*}
These two equalities yield
\begin{align*}
\begin{bmatrix}
XB+U^\top C B & U^\top D
\end{bmatrix}
=
X B_e .
\end{align*}
Therefore, condition~\eqref{eq:LMI2} can be rewritten as
\begin{align*}
\begin{bmatrix}
Z & B_e^\top X\\
X B_e & X
\end{bmatrix}\succeq 0.
\end{align*}
Since $X\succ0$, the Schur complement lemma yields
\begin{align*}
Z \succeq B_e^\top X B_e.
\end{align*}
Taking traces and using $X\succeq P$ gives
\begin{align}
\label{eq:thm4_Z}
\operatorname{tr}(Z)
\ge \operatorname{tr}(B_e^\top X B_e) \ge \operatorname{tr}(B_e^\top P B_e).
\end{align}
Let $\Sigma_e:=\lim_{k\to\infty}\Cov(e_k)$.
Since $A_e$ is Schur stable and $\eta_k$ has identity covariance, $\Sigma_e$ is the unique solution of
\begin{align*}
\Sigma_e=A_e\Sigma_e A_e^\top + B_e B_e^\top.
\end{align*}
Using the series representation of $P$ in~\eqref{eq:series_P}, we obtain
\begin{align*}
\operatorname{tr}(B_e^\top P B_e)
&=
\sum_{i=0}^{\infty}\|H A_e^i B_e\|_F^2
=
\operatorname{tr}(H\Sigma_e H^\top).
\end{align*}
This, together with~\eqref{eq:thm4_Z}, implies that
\begin{align*}
\operatorname{tr}(H\Sigma_e H^\top)\le \operatorname{tr}(Z).
\end{align*}

Finally, since
$z_k-\hat z_k =Hx_k-(H\hat x_k+\xi_k) =H(x_k-\hat x_k)-\xi_k =He_k-\xi_k,
$
and $\xi_k$ is independent of $e_k$, we have
\begin{align*}
\lim_{k\to\infty}\Cov(z_k-\hat z_k)
=
H\Sigma_e H^\top + \Xi.
\end{align*}
Taking traces yields
\begin{align*}
\lim_{k\to\infty}\operatorname{tr}\!\bigl(\Cov(z_k-\hat z_k)\bigr)
&=
\operatorname{tr}(H\Sigma_e H^\top)+\operatorname{tr}(\Xi) \\
&\le
\operatorname{tr}(Z)+\operatorname{tr}(\Xi).
\end{align*}
Since
\begin{align*}
\operatorname{tr}(\Xi)=\beta(\varepsilon,\delta)\operatorname{tr}(W),
\end{align*}
it follows that
\begin{align*}
\lim_{k\to\infty}\operatorname{tr}\!\bigl(\Cov(z_k-\hat z_k)\bigr)
\le
\operatorname{tr}(Z)+\beta(\varepsilon,\delta)\operatorname{tr}(W).
\end{align*}
This completes the proof.
\QED

\bibliographystyle{ieeetr}
\bibliography{ref}

\begin{IEEEbiography}{Le Liu}
received the Bachelor of Engineering degree in power and energy engineering and Master of Engineering degree in systems and control
from Northwest A\&F University and Dalian University of Technology, China, in 2019 and 2022, respectively. He is currently working toward the
Ph.D. Degree in Systems and Control in Faculty of Science and Engineering from
University of Groningen, The Netherlands.
His research interests include privacy of control systems, networked control systems and nonlinear systems.
\end{IEEEbiography}

\begin{IEEEbiography}{Yu Kawano}(Senior Member) 
received his M.S. and Ph.D. degrees in Engineering from Osaka University, Japan, in 2011 and 2013, respectively. He was a Postdoctoral Researcher at Kyoto University, Japan, from 2013 to 2016, and at the University of Groningen, the Netherlands, from 2016 to 2019. He was an Associate Professor from 2019 to 2026 and has been a Full Professor since 2026 at the Graduate School of Advanced Science and Engineering, Hiroshima University. He has also held visiting research positions at the Tallinn University of Technology (Estonia), the University of Groningen (the Netherlands), the University of Pavia (Italy), the Indian Institute of Technology Bombay (India), the University of California, Santa Barbara (USA), the Grenoble Alpes University (France). His research interests include nonlinear systems, complex networks, model reduction, and privacy in control systems. He serves as an Associate Editor for IFAC Automatica, Systems \& Control Letters, IEEE Transactions on Systems, Man, and Cybernetics: Systems, the IEEE CSS Conference Editorial Board, and the EUCA Conference Editorial Board.
\end{IEEEbiography}

\begin{IEEEbiography}{Ming Cao} has since 2016 been a professor of networks and robotics with the Engineering and Technology Institute (ENTEG) at the University of Groningen, the Netherlands, where he started as an assistant professor in 2008. Since 2022 he is the director of the Jantina Tammes School of Digital Society, Technology and AI at the same university. He received the Bachelor degree in 1999 and the Master degree in 2002 from Tsinghua University, China, and the Ph.D. degree in 2007 from Yale University, USA, all in electrical engineering. From 2007 to 2008, he was a Research Associate at Princeton University, USA. He worked as a research intern in 2006 at the IBM T. J. Watson Research Center, USA. He is the 2017 and inaugural recipient of the Manfred Thoma medal from the International Federation of Automatic Control (IFAC) and the 2016 recipient of the European Control Award sponsored by the European Control Association (EUCA). He is an IEEE fellow. He is a Senior Editor for Systems and Control Letters, an Associate Editor for IEEE Transactions on Automatic Control, IEEE Transaction of Control of Network Systems and IEEE Robotics \& Automation Magazine, and was an associate editor for IEEE Transactions on Circuits and Systems and IEEE Circuits and Systems Magazine. He is a member of the IFAC Council and a vice chair of the IFAC Technical Committee on Large-Scale Complex Systems. His research interests include autonomous robots and multi-agent systems, complex networks and decision-making processes. 
\end{IEEEbiography}

\end{document}